\newlength{\halfpagewidth}
\newtheorem{definition}{Definition}
\newtheorem{proposition}[definition]{Proposition}
\newtheorem{Lemma}[definition]{Lemma}
\newtheorem{Theorem}[definition]{Theorem}
\newtheorem{Corollary}[definition]{Corollary}
\newtheorem{conjecture}[definition]{Conjecture}
\newtheorem{remark}[definition]{Remark}
\newtheorem{example}[definition]{Example}
\newtheorem{question}[definition]{Question}
\def\squareforqed{\hbox{\rlap{$\sqcap$}$\sqcup$}}
\def\qed{\ifmmode\squareforqed\else{\unskip\nobreak\hfil
		\penalty50\hskip1em\null\nobreak\hfil\squareforqed
		\parfillskip=0pt\finalhyphendemerits=0\endgraf}\fi}
\def\endenv{\ifmmode\;\else{\unskip\nobreak\hfil
		\penalty50\hskip1em\null\nobreak\hfil\;
		\parfillskip=0pt\finalhyphendemerits=0\endgraf}\fi}
\newenvironment{proof}{\noindent \textbf{{Proof.~} }}{\qed}
\def\Dbar{\leavevmode\lower.6ex\hbox to 0pt
	{\hskip-.23ex\accent"16\hss}D}
\def\url@leostyle{%
	\@ifundefined{selectfont}{\def\UrlFont{\sf}}{\def\UrlFont{\small\ttfamily}}}
\def\bcj{\begin{conjecture}}
	\def\ecj{\end{conjecture}}
\def\bcr{\begin{corollary}}
	\def\ecr{\end{corollary}}
\def\bd{\begin{definition}}
	\def\ed{\end{definition}}
\def\bea{\begin{eqnarray}}
\def\eea{\end{eqnarray}}
\def\bem{\begin{enumerate}}
	\def\eem{\end{enumerate}}
\def\bex{\begin{example}}
	\def\eex{\end{example}}
\def\bim{\begin{itemize}}
	\def\eim{\end{itemize}}
\def\bl{\begin{lemma}}
	\def\el{\end{lemma}}
\def\bma{\begin{bmatrix}}
	\def\ema{\end{bmatrix}}
\def\bpf{\begin{proof}}
	\def\epf{\end{proof}}
\def\bpp{\begin{proposition}}
	\def\epp{\end{proposition}}
\def\bqu{\begin{question}}
	\def\equ{\end{question}}
\def\br{\begin{remark}}
	\def\er{\end{remark}}
\def\bt{\begin{theorem}}
	\def\et{\end{theorem}}
\def\btb{\begin{tabular}}
	\def\etb{\end{tabular}}
\newcommand{\nc}{\newcommand}
\def\r{\rho}
\def\ps{\psi}
\nc{\bbA}{\mathbb{A}} \nc{\bbB}{\mathbb{B}} \nc{\bbC}{\mathbb{C}}
\nc{\bbD}{\mathbb{D}} \nc{\bbE}{\mathbb{E}} \nc{\bbF}{\mathbb{F}}
\nc{\bbG}{\mathbb{G}} \nc{\bbH}{\mathbb{H}} \nc{\bbI}{\mathbb{I}}
\nc{\bbJ}{\mathbb{J}} \nc{\bbK}{\mathbb{K}} \nc{\bbL}{\mathbb{L}}
\nc{\bbM}{\mathbb{M}} \nc{\bbN}{\mathbb{N}} \nc{\bbO}{\mathbb{O}}
\nc{\bbP}{\mathbb{P}} \nc{\bbQ}{\mathbb{Q}} \nc{\bbR}{\mathbb{R}}
\nc{\bbS}{\mathbb{S}} \nc{\bbT}{\mathbb{T}} \nc{\bbU}{\mathbb{U}}
\nc{\bbV}{\mathbb{V}} \nc{\bbW}{\mathbb{W}} \nc{\bbX}{\mathbb{X}}
\nc{\bbZ}{\mathbb{Z}}
\nc{\bA}{{\bf A}} \nc{\bB}{{\bf B}} \nc{\bC}{{\bf C}}
\nc{\bD}{{\bf D}} \nc{\bE}{{\bf E}} \nc{\bF}{{\bf F}}
\nc{\bG}{{\bf G}} \nc{\bH}{{\bf H}} \nc{\bI}{{\bf I}}
\nc{\bJ}{{\bf J}} \nc{\bK}{{\bf K}} \nc{\bL}{{\bf L}}
\nc{\bM}{{\bf M}} \nc{\bN}{{\bf N}} \nc{\bO}{{\bf O}}
\nc{\bP}{{\bf P}} \nc{\bQ}{{\bf Q}} \nc{\bR}{{\bf R}}
\nc{\bS}{{\bf S}} \nc{\bT}{{\bf T}} \nc{\bU}{{\bf U}}
\nc{\bV}{{\bf V}} \nc{\bW}{{\bf W}} \nc{\bX}{{\bf X}}
\nc{\bZ}{{\bf Z}}
\nc{\cA}{{\cal A}} \nc{\cB}{{\cal B}} \nc{\cC}{{\cal C}}
\nc{\cD}{{\cal D}} \nc{\cE}{{\cal E}} \nc{\cF}{{\cal F}}
\nc{\cG}{{\cal G}} \nc{\cH}{{\cal H}} \nc{\cI}{{\cal I}}
\nc{\cJ}{{\cal J}} \nc{\cK}{{\cal K}} \nc{\cL}{{\cal L}}
\nc{\cM}{{\cal M}} \nc{\cN}{{\cal N}} \nc{\cO}{{\cal O}}
\nc{\cP}{{\cal P}} \nc{\cQ}{{\cal Q}} \nc{\cR}{{\cal R}}
\nc{\cS}{{\cal S}} \nc{\cT}{{\cal T}} \nc{\cU}{{\cal U}}
\nc{\cV}{{\cal V}} \nc{\cW}{{\cal W}} \nc{\cX}{{\cal X}}
\nc{\cZ}{{\cal Z}}
\nc{\hA}{{\hat{A}}} \nc{\hB}{{\hat{B}}} \nc{\hC}{{\hat{C}}}
\nc{\hD}{{\hat{D}}} \nc{\hE}{{\hat{E}}} \nc{\hF}{{\hat{F}}}
\nc{\hG}{{\hat{G}}} \nc{\hH}{{\hat{H}}} \nc{\hI}{{\hat{I}}}
\nc{\hJ}{{\hat{J}}} \nc{\hK}{{\hat{K}}} \nc{\hL}{{\hat{L}}}
\nc{\hM}{{\hat{M}}} \nc{\hN}{{\hat{N}}} \nc{\hO}{{\hat{O}}}
\nc{\hP}{{\hat{P}}} \nc{\hR}{{\hat{R}}} \nc{\hS}{{\hat{S}}}
\nc{\hT}{{\hat{T}}} \nc{\hU}{{\hat{U}}} \nc{\hV}{{\hat{V}}}
\nc{\hW}{{\hat{W}}} \nc{\hX}{{\hat{X}}} \nc{\hZ}{{\hat{Z}}}
\nc{\hn}{{\hat{n}}}
\def\dim{\mathop{\rm Dim}}
\def\max{\mathop{\rm max}}
\def\min{\mathop{\rm min}}
\def\supp{\mathop{\rm supp}}
\def\tr{\mathop{\rm Tr}}
\newcommand{\bra}[1]{\langle#1|}
\newcommand{\ket}[1]{|#1\rangle}
\def\Dbar{\leavevmode\lower.6ex\hbox to 0pt
	{\hskip-.23ex\accent"16\hss}D}
\begin{document}
	\title{An Extention of Entanglement Measures for Pure States}
	
	\author{Xian Shi}\email[]
	{shixian01@buaa.edu.cn}
	\affiliation{School of Mathematical Sciences, Beihang University, Beijing 100191, China}
	%\affiliation{Institute of Mathematics, Academy of Mathematics and Systems Science, Chinese Academy of Sciences, Beijing, 100190, China}
	\author{Lin Chen}\email[]{linchen@buaa.edu.cn (corresponding author)}
	\affiliation{School of Mathematical Sciences, Beihang University, Beijing 100191, China}
	\affiliation{International Research Institute for Multidisciplinary Science, Beihang University, Beijing 100191, China}
	
	%\author{Yi Shen}\email[]
	%{yishen@buaa.edu.cn}
	%\affiliation{School of Mathematics and Systems Science, Beihang University, Beijing 100191, China}
	%
	%\author{Yize Sun}
	%\affiliation{School of Mathematics and Systems Science, Beihang University, Beijing 100191, China}
	
	%\author{Lijun Zhao}
	%\affiliation{School of Mathematics and Systems Science, Beihang University, Beijing 100191, China}
	
	%\author{Yumin Guo}
	%\affiliation{School of Mathematical Sciences, Capital Normal University, Beijing 100048, China}
	
	\date{\today}
	
	\pacs{03.65.Ud, 03.67.Mn}

\begin{abstract}
To quantify the entanglement is one of the most important topics in quantum entanglement theory. In [arXiv: 2006.12408], the authors proposed a method to build a measure from the orginal domain to a larger one. Here we apply that method to build an entanglement measure from measures for pure states. First, we present conditions when the entanglement measure is an entanglement monotone and convex, we also present an interpretation of the smoothed one-shot entanglement cost under the method here. At last, we present a difference between  the local operation and classical communication (LOCC) and the separability-preserving (SEPP) operations, then we present the entanglement measures built from the geometric entanglement measure for pure states by the convex roof extended method and the method here are equal, at last, we present the relationship between the concurrence and the entanglement measure built from concurrence for pure states by the method here on $2\otimes 2$ systems. We also present the measure is monogamous for $2\otimes2\otimes d$ system.
\end{abstract}

\maketitle
\section{introduction}
\indent Quantum entanglement is one of the essiential features in quantum mechanics when comparing with the classical physics \cite{horodecki2009quantum,plenio2014introduction}. It also plays key roles in quantum information processing, such as, quantum cryptography \cite{ekert1991quantum}, quantum teleportation \cite{bennett1993teleporting} and quantum superdense coding \cite{bennett1992communication}. \\
\indent One of the most important and interesting problems in studying the entanglement is how to quantify the entanglement in a composite quantum system. In 1996, the authors in \cite{bennett1996mixed} proposed the distillable entanglement and entanglement cost and presented their operational interpretations. The authors in \cite{vedral1997quantifying} presented three necessary conditions that an entanglement measure should satisfy in 1997, and one of the important conditions is that the quantum entanglement cannot increase under LOCC. In 2000, Vidal proposed a general mathematical framework for entanglement measures \cite{vidal2000entanglement}. There the author also presented a convex roof extended method to bulid entanglement monotone for bipartite entangled systems from some functions on bipartite pure states. The other important method to quantify the quantum entanglement is based on the distance to the closest separable state. The most important examples are the geometric measures \cite{wei2003geometric,markham2007entanglement} and the quantum relative entropy \cite{horodecki2005local}. Due to the monotonicity of the inner product and quantum relative entropy under the LOCC, it is clear that the above two are entanglement measures.  Another important method to build an entanglement measure of a bipartite state $\rho_{AB}$ is the minimum quantum conditional mutual information of all the extensions of $\rho_{AB}.$ there the authors named the measure the squashed entanglement \cite{christandl2004squashed}. Compared with the entanglement distillation, the squashed entanglement is additive on tensor products and superadditive in general.  Recently, Gour and Tomamichel proposed a new method to quantify the resource for the general resources \cite{gour2020optimal}.\\
\indent  In this paper, we mainly apply the method to build an entanglement measure for mixed states from the measures for pure states. Given an entanglement measure $E$ for pure states in bipartite systems, we first present a sufficient condition when an entanglement measure built from the method here is an entanglement monotone. Then we consider the relation between an entanglement measure built from the method here and the convex roof extended method \cite{vidal2000entanglement}. And we also present a condition when the entanglement measure built from the method here is convex. As an application, we present a difference between the LOCC and SEPP by the Schmidt number under the method here \cite{terhal2000schmidt,gour2020optimal}, this is an entanglement measure which can be increased under the separability-preserving operations. Then we present the relation between the geometric entanglement measure under the convex roof extend and the method proposed here, we also present that for $2\otimes 2$ systems, the concurrence built from the convex roof extended method and the method here are equal, based on the result, we have that the concurrence satisfies the monogamy of entanglement proposed in \cite{gour2018monogamy} for $2\otimes2\otimes d$ systems.\\
\indent This paper is organized as follows. In Sec. \MakeUppercase{\romannumeral2}, we first present the preliminary knowledge needed here, and then we present a sufficient condition when the entanglement measure built here is entanglement monotone, we also consider a condition when the entanglement measure built from the method here is convex. In Sec. \MakeUppercase{\romannumeral3}, we present an interpretation of the smoothed one-shot entanglement cost under the method here. In Sec. \MakeUppercase{\romannumeral4}, we present some applications of the entanglement measure bulit from here, first we present a difference between the LOCC operations and the separability-preserving operations, then we present the relationship between the geometric entanglement measure for pure states built from the convex roof extended method and the method here, at last, we consider the entanglement measure generalized from the concurrence for pure states in $2\otimes2$ systems, and then we show the entanglement measure is monogamous for $2\otimes2\otimes d$ system. In Sec. \MakeUppercase{\romannumeral5}, we end with a conclusion.
\section{Building entanglement monotone for pure states}
\indent In this section, first we recall some preliminary knowledge on the entanglement measures and operations of entanglement theory. Then based on the method \cite{gour2020optimal} proposed , we will propose some entanglement measures built from entanglement measure for pure states, and we present some sufficient conditions when entanglement measures are entanglement monotone, convex and subadditivity.\\
\indent In the following, we denote $\mathcal{D}(\mathcal{H}_{AB})$ the set of states on $\mathcal{H}_{AB}$ and $\mathcal{S}(\mathcal{H}_{AB})$ the set of separable states on $\mathcal{H}_{AB}.$ If a bipartite pure state $\ket{\psi}_{AB}$ can be written as $\ket{\psi}_{AB}=\ket{\phi_1}_A\otimes\ket{\phi_2}_B,$ then $\ket{\psi}_{AB}$ is separable, otherwise, $\ket{\psi}_{AB}$ is an entangled state. If a mixed state $\rho_{AB}$ can be written as $\rho_{AB}=\sum_i p_i\rho_A^i\otimes\sigma_B^i,$ then $\rho_{AB}$ is separable, otherwise, $\rho_{AB}$ is entangled. \\
\indent Recall that an entanglement measure $E:\mathcal{D}(\mathcal{H}_{AB})\rightarrow\mathbb{R}^{+}$ is an entanglement measure \cite{vedral1997quantifying} if it satifies:
\begin{itemize}
	\item [(i)] $E(\rho_{AB})=0$ if $\rho_{AB}\in \mathcal{S}(\mathcal{H}_{AB})$
	\item [(ii)]$E$ doesnot increase under the LOCC operation.\begin{align*}
	E(\Psi(\rho_{AB}))\le E(\rho_{AB}),
	\end{align*} 
	here $\Psi$ is an LOCC operation. 
\end{itemize}\par
In \cite{vidal1999entanglement}, the author presented that when $E$ satisfies the following two conditions, $E$ is an entanglement monotone,
\begin{itemize}
	\item [(iii)]$E(\rho)\ge \sum_k p_kE(\sigma_k),$ here $\sigma_{k}=\frac{\mathcal{E}_{i,k}(\rho_{AB})}{p_k},$ $p_k=\tr\mathcal{E}_{i,k}(\rho_{AB})$, $\mathcal{E}_{i,k}$ is any unilocal quantum operation performed by any party $A$ or $B.$
	\item [(iv)] For any decomposition $\{p_k,\rho_k\}$ of $\rho_{AB}$
	\begin{align*}
	E(\rho)\le \sum_k p_kE(\rho_k)
	\end{align*}
\end{itemize}
\par Obviously, when $E$ is an entanglement monotone, $E$ is an entanglement measure.\\
\indent Assume $\ket{\psi}_{AB}$ is a bipartite pure state in $\mathcal{H}_{AB}$ that can be written as $\ket{\psi}_{AB}=\sum_i\sqrt{\lambda_i}\ket{ii},$ and let $\mu^{\downarrow}(\ket{\psi}_{AB})$ be the vector $(\lambda_0,\lambda_1,\cdots,\lambda_{d-1})$ in decreasing order, then we recall entanglement measures $E_k$ for pure states $\ket{\psi}_{AB},$ $E_k(\psi_{AB})=f_k(\tr\ket{\psi}_{AB}\bra{\psi})=\sum_{i=k-1}^{d-1}\lambda_i$ \cite{vidal2000entanglement}. Next we recall that if an entanglement measure $E$ for a pure state $\ket{\psi}$ is the same entanglement ordering \cite{virmani2000ordering} with $E_k,$ $k=1,2,\cdots,d,$ we mean that if for any two vectors $\ket{\psi_1}$ and $\ket{\psi_2},$ $E_k(\psi_1)\ge E_k(\psi_2),k=1,2,\cdots,d,$ then $E(\psi_1)\ge E(\psi_2).$\\
\indent Assume $\Lambda:A\rightarrow A^{'}$ is a completely positive and trace-preserving map, then its Choi matrix is $J_{\Lambda}=(I\otimes\Lambda)(\ket{\Psi}_{AA^{'}}{\bra{\Psi}}),$ here $\ket{\Psi}_{AA^{'}}$ is a maximally entangled state. As the LOCC operations are hard to characterise mathematically, then some important problems on quantum entanglement theory are hard to solve. Some meaningful methods proposed are to extend the set of LOCC operations \cite{rains1997entanglement,rains1999bound,rains2001semidefinite,eggeling2001distillability,brandao2010reversible}, which makes some problems on distinguishing and transformation of entangled states much easier to solver. Then we propose the structures of separable operations (SEP), positive partial transpose (PPT) operations, and separability-preserving (SEPP) operations,
\begin{align*}
SEP=&\{\Lambda|\Lambda=\sum_i (A_i\otimes B_i)^{\dagger}\cdot(A_i\otimes B_i)\}\\
PPT=&\{\Lambda|J_{\Lambda}^{T_{BB^{'}}}\ge 0\}\\
SEPP=&\{\Lambda|\rho \textit{ is separable}\Longrightarrow \Lambda(\rho) \textit{is separable.}\}
\end{align*}
\indent Recently, Gour and Tomamichel proposed a new method to extend the resource measures from one domain to a larger one \cite{gour2020optimal}. Yu $et$ $al.$ considered the coherence measures in terms of the method and presented operational interpretations for some coherence measures \cite{yu2020quantifying}. Here we apply this method to the entanglement theory to present new entanglement measures, and then we consider the properties of the entanglement measures.\\
\indent Assume $\ket{\psi}_{AB}$ is a pure state in $\mathcal{H}_{AB}$, $E$ is an entanglement measure for pure states in $\mathcal{H}_{AB},$  then we extend the above measure for pure states to a corresponding quantity for the mixed states,
\begin{align}
\overline{E}(\rho_{AB})=\inf_{\ket{\psi}_{AB}\in \mathcal{R}(\rho_{AB})} E(\ket{\psi}_{AB}),\label{Ev} 
\end{align} 
here the infimum takes over all the pure states in the set $\mathcal{R}(\rho_{AB})=\{ \psi_{AB}\in \mathcal{H}_{AB}|\rho_{AB}=\Lambda(\psi_{AB}),\Lambda\in \mathcal{T}.\}$ Here $\mathcal{T}$ stands for LOCC, SEP, PPT or SEPP.\par
 Next we recall the convex roof extended method to bulid an entanglement monotone for a mixed state that Vidal proposed in \cite{vidal1999entanglement}. \par 
Assume $E(\ket{\psi}_{AB})=f(\tr_B(\ket{\psi}_{AB}\bra{\psi})), f:\mathcal{D}(\mathcal{H}_A)\rightarrow\mathcal{R}^{+}.$ If $f$ satisfies the following conditions:\\
\begin{itemize}
	\item[(i)] $U$-invariant: $f(U\sigma U^{\dagger})=f(\sigma),$ $\forall \sigma\in \mathcal{D}(\mathcal{H}_A), U$ is a unitary matrix on $\mathcal{H}_A,$
	\item[(ii)] concave: $f(\lambda\sigma_1+(1-\lambda)\sigma_2)\ge \lambda_1f(\sigma)+(1-\lambda)f(\sigma_2),$ here $\sigma_i\in \mathcal{D}(\mathcal{H}_A),$ $i=1,2,$ $\lambda\in (0,1).$ 
\end{itemize}
Vidal showed $E$ is an entanglement monotone for mixed states by the convex roof extended method \cite{vidal1999entanglement},
\begin{align}
E_f(\rho_{AB})=\min_{\{p_i,\ket{\psi}_{AB}\}}\sum_ip_iE(\ket{\psi_i}),\label{cre}
\end{align}
where the minimum takes over all the decomposition of $\{p_i,\ket{\psi_i}_{AB}\}$ such that $\rho_{AB}=\sum_ip_i\ket{\psi_i}\bra{\psi_i}.$\\
\begin{Theorem}\label{th1}
	Assume $\rho_{AB}\in \mathcal{D}(\mathcal{H}_{AB}),$ then when $\mathcal{T}$ consists of the LOCC operations, the entanglement measure defined as $(\ref{Ev})$ is an entanglement measure. If $E$ is entanglement monotone for pure states, then $\overline{E}$ satisfies the condition (iii),  when the function $f$ corresponding to $E$ satisfies $f(\lambda_1\Lambda_1+\lambda_2\Lambda_2)\le \lambda_1f(\Lambda_1)+\lambda_2f(\Lambda_2),$ here $\Lambda_i$, $i=1,2$ are diagonal matrices on the space $\mathcal{H}_A$,  then $\overline{E}$ satisfies the condition (iv).
\end{Theorem}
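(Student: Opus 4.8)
The plan is to verify conditions (i)--(iv) in turn, exploiting the one structural feature that makes $\overline{E}$ well behaved: the preimage set $\mathcal{R}(\rho_{AB})$ only grows when we postcompose with an operation from $\mathcal{T}$. I would first note that $\mathcal{R}(\rho_{AB})\neq\emptyset$ for every $\rho_{AB}$ (any state can be produced from a sufficiently entangled pure state by LOCC), so $\overline{E}$ is well defined, and $\overline{E}\ge 0$ since $E\ge 0$ on pure states. For (i), if $\rho_{AB}$ is separable I would exhibit a product pure state in its preimage set: writing $\rho_{AB}=\sum_i p_i\proj{a_i}\otimes\proj{b_i}$, the LOCC protocol ``Alice samples $i$ with probability $p_i$, announces $i$, and the parties locally prepare $\ket{a_i}_A\otimes\ket{b_i}_B$'' maps the product state $\ket{0}_A\otimes\ket{0}_B$ to $\rho_{AB}$; as $E$ vanishes on product states and $\overline{E}\ge0$, this forces $\overline{E}(\rho_{AB})=0$. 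For (ii), given an LOCC map $\Psi$ and any $\ket{\psi}_{AB}\in\mathcal{R}(\rho_{AB})$ with $\rho_{AB}=\Lambda(\ket{\psi}_{AB})$, the composite $\Psi\circ\Lambda$ is again LOCC, so $\ket{\psi}_{AB}\in\mathcal{R}(\Psi(\rho_{AB}))$ and $\overline{E}(\Psi(\rho_{AB}))\le E(\ket{\psi}_{AB})$; taking the infimum over $\mathcal{R}(\rho_{AB})$ gives $\overline{E}(\Psi(\rho_{AB}))\le\overline{E}(\rho_{AB})$. Together (i) and (ii) establish that $\overline{E}$ is an entanglement measure.

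For (iii), I would fix $\varepsilon>0$ and choose a near-optimal preimage $\ket{\psi}_{AB}$ with $\rho_{AB}=\Lambda(\ket{\psi}_{AB})$ and $E(\ket{\psi}_{AB})\le\overline{E}(\rho_{AB})+\varepsilon$. Writing the composite instrument $\{\mathcal{E}_{i,k}\circ\Lambda\}_k$ in a rank-one Kraus form, the pure state $\ket{\psi}_{AB}$ is sent to pure branches $\ket{\hat{\chi}_{jkl}}$ with probabilities $q_{jkl}$, where for each outcome $k$ one has $\sum_{j,l}q_{jkl}=p_k$ and $\sigma_k=\tfrac{1}{p_k}\sum_{j,l}q_{jkl}\proj{\hat{\chi}_{jkl}}$. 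Since $E$ is an entanglement monotone for pure states, strong monotonicity under this selective operation gives $E(\ket{\psi}_{AB})\ge\sum_{j,k,l}q_{jkl}E(\ket{\hat{\chi}_{jkl}})=\sum_k p_k\big(\tfrac{1}{p_k}\sum_{j,l}q_{jkl}E(\ket{\hat{\chi}_{jkl}})\big)$. It then remains to show, for each $k$, that the inner average dominates $\overline{E}(\sigma_k)$, i.e. $\overline{E}(\sigma_k)\le\tfrac{1}{p_k}\sum_{j,l}q_{jkl}E(\ket{\hat{\chi}_{jkl}})$; summing over $k$ and letting $\varepsilon\to0$ then yields (iii).

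The main obstacle is precisely this last step: $\sigma_k$ is mixed, so I must manufacture a single pure preimage of $\sigma_k$ whose $E$-value is controlled by the decomposition $\{\hat{\chi}_{jkl}\}$. The natural device is a flagged superposition $\ket{\phi_k}=\sum_{j,l}\sqrt{q_{jkl}/p_k}\,\ket{\hat{\chi}_{jkl}}_{AB}\otimes\ket{jl}_{A'}$ with a local ancilla $A'$ on Alice's side and orthonormal flags; tracing out $A'$ is a local (hence $\mathcal{T}$) operation returning $\sigma_k$, so $\ket{\phi_k}\in\mathcal{R}(\sigma_k)$ and $\overline{E}(\sigma_k)\le E(\ket{\phi_k})$. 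Because the flags are orthonormal, the reduced state of $\ket{\phi_k}$ is block diagonal, $\bigoplus_{j,l}(q_{jkl}/p_k)\,\rho^A_{jkl}$, and bounding $E(\ket{\phi_k})=f(\bigoplus_{j,l}(q_{jkl}/p_k)\rho^A_{jkl})$ by $\tfrac1{p_k}\sum_{j,l}q_{jkl}f(\rho^A_{jkl})$ is exactly a subadditivity statement for $f$ on block-diagonal matrices. This is the delicate point, and it is the same structural property of $f$ that reappears in condition (iv); I would therefore carry out (iii) either under this $f$-condition or, equivalently, by first establishing the comparison $\overline{E}\le E_f$ with the convex-roof extension $E_f$.

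For (iv), the hypothesis on $f$ is used directly. Given a decomposition $\rho_{AB}=\sum_k p_k\rho_k$, I would pick near-optimal preimages $\ket{\psi_k}$ with $\rho_k=\Lambda_k(\ket{\psi_k})$ and $E(\ket{\psi_k})\le\overline{E}(\rho_k)+\varepsilon$, and glue them with correlated local flags, $\ket{\psi}=\sum_k\sqrt{p_k}\,\ket{\psi_k}_{AB}\otimes\ket{k}_{A'}\otimes\ket{k}_{B'}$. Measuring the flags (on which the parties always agree) and applying $\Lambda_k$ is an LOCC map sending $\ket{\psi}$ to $\rho_{AB}$, so $\ket{\psi}\in\mathcal{R}(\rho_{AB})$ and $\overline{E}(\rho_{AB})\le E(\ket{\psi})$. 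The orthonormal flags make the Alice-side reduced state block diagonal, $\bigoplus_k p_k\rho^A_{\psi_k}$; after diagonalizing each block using the $U$-invariance of $f$, the hypothesis $f(\lambda_1\Lambda_1+\lambda_2\Lambda_2)\le\lambda_1 f(\Lambda_1)+\lambda_2 f(\Lambda_2)$, iterated to $n$ terms on the diagonal density matrices $D_k=\rho^A_{\psi_k}\otimes\proj{k}$, gives $E(\ket{\psi})=f(\sum_k p_k D_k)\le\sum_k p_k f(D_k)=\sum_k p_k E(\ket{\psi_k})$. Combining, $\overline{E}(\rho_{AB})\le\sum_k p_k\overline{E}(\rho_k)+\varepsilon$, and $\varepsilon\to0$ finishes (iv). The only remaining routine checks are that appending a pure flag leaves $f$ unchanged and that the correlated-flag preparation is genuinely LOCC.
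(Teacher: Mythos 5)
Your treatment of (i), (ii) and the overall architecture is sound, and your construction for (iv) is a legitimate variant of the paper's: where you glue the near-optimal preimages $\ket{\psi_k}$ with correlated local flags and invoke convexity of $f$ on the resulting block-diagonal reduced state, the paper instead forms a single pure state $\ket{\psi}\in\mathcal{H}_{AB}$ whose ordered Schmidt vector is $\sum_k q_k\mu^{\downarrow}(\ket{\psi_k})$; by the Jonathan--Plenio theorem this state reaches $\rho_{AB}$ by LOCC, and the hypothesis $f(\sum_k q_k\Lambda_k)\le\sum_k q_k f(\Lambda_k)$ then applies verbatim to the diagonal matrices $\Lambda_k=\diag(\mu^{\downarrow}(\ket{\psi_k}))$ living on $\mathcal{H}_A$ itself. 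Your version needs two extra facts not granted by the theorem's wording: that the $f$-convexity hypothesis holds on the enlarged space $\mathcal{H}_A\otimes\mathcal{H}_{A'}$, and that $f$ is unchanged by padding the spectrum with zeros. These are mild, but the paper's in-space construction avoids them entirely.

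The genuine gap is in (iii). The theorem asserts that (iii) follows from $E$ being an entanglement monotone for pure states \emph{alone}; your argument reduces (iii) to bounding $\overline{E}(\sigma_k)$ by the average $\tfrac{1}{p_k}\sum_{j,l}q_{jkl}E(\ket{\hat\chi_{jkl}})$, and you yourself concede that this step requires the same convexity of $f$ that the theorem imposes only for (iv). As written you therefore prove a strictly weaker statement. (Your flagged preimage of $\sigma_k$ is also broken as stated: with the flag register $A'$ on Alice's side only, $\tr_B\proj{\phi_k}$ retains the coherences $\tr_B(\ket{\hat\chi_{jkl}}\bra{\hat\chi_{j'kl'}})$ between flag sectors and is not block diagonal; you would need the flag correlated on both sides, as in your (iv) construction.) The missing idea is the paper's majorization detour: take the optimal preimage $\ket{\psi}$ and a pure decomposition $\{r_j,\ket{\eta_j}\}$ of $\rho$ with $\mu^{\downarrow}(\ket{\psi})\prec\sum_j r_j\mu^{\downarrow}(\ket{\eta_j})$, apply the unilocal measurement to each $\ket{\eta_j}$ to obtain pure branches $\ket{\xi_{jk}}$ with weights $t_{jk}$, and for each outcome $k$ define a \emph{single} pure state $\ket{\chi_k}$ with $\mu^{\downarrow}(\ket{\chi_k})=\sum_j\frac{r_jt_{jk}}{q_k}\mu^{\downarrow}(\ket{\xi_{jk}})$. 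By Jonathan--Plenio, $\ket{\chi_k}\stackrel{LOCC}{\rightarrow}\rho_k$ (hence $E(\ket{\chi_k})\ge\overline{E}(\rho_k)$) and $\ket{\psi}\stackrel{LOCC}{\rightarrow}\{q_k,\ket{\chi_k}\}$, so strong monotonicity of $E$ on pure-state ensembles gives $E(\ket{\psi})\ge\sum_k q_kE(\ket{\chi_k})\ge\sum_k q_k\overline{E}(\rho_k)$ with no convexity of $f$ used anywhere.
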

\begin{proof}
	The proof that (\ref{Ev}) is an entanglement measure can be found in \cite{gour2020optimal}.\par
	Then we prove the condition (iii) when $E$ is an entanglement monotone for pure states.\par
	If $\ket{\psi}_{AB}$ is a pure state. There exists a decomposition $\{p_k,\ket{\phi_k}\}$ such that $\ket{\psi}\stackrel{LOCC}{\rightarrow}\{p_k,\ket{\phi_k}\}$ of $\rho,$ then by the assumption,
	\begin{align}
	\overline{E}(\ket{\psi})\ge \sum_kp_k\overline{E}(\ket{\phi_k}),\label{0mu}
	\end{align}
\par Next when $\rho$ is a mixed state, assume $\ket{\psi}$ is the optimal pure state for $\rho$ in terms of $E$, there exists a decomposition $\{r_j,\ket{\eta_j}\}$ of $\rho$ such that
\begin{align}
\ket{\psi}\stackrel{LOCC}{\longrightarrow}\{r_j,\ket{\eta_j}\},\nonumber\\
\mu^{\downarrow}(\ket{\psi})\prec\sum_jr_j\mu^{\downarrow}(\ket{\eta_j}),\label{1mu}
\end{align}
here $\rho=\sum_jr_j\ket{\eta_j}\bra{\eta_j},$ the second equality is due to the results in \cite{jonathan1999minimal}. Assume $\mathcal{E}_{k}$ is a unilocal operation on party $B$, then let $\rho_k=\frac{\mathcal{E}_k(\rho)}{q_k},$ $q_k=\tr\mathcal{E}_k(\rho)$, $\rho_{jk}=\frac{\mathcal{E}_k(\ket{\eta_j})}{t_{jk}}$, $t_{jk}=\tr\mathcal{E}_k(\ket{\eta_j})$. Due to the definition of $\overline{E},$ it is invariant under local unitary operations, it is monotone under the actions 
\begin{align}
\rho\rightarrow \rho\otimes\rho_1,\nonumber\\
\rho\rightarrow \tr_{\mathcal{Q}}\rho,\nonumber
\end{align}
here $\rho_1$ is a state added by one party to its subsystem,  $\mathcal{Q}$ is held by $B$, and $\tr_{\mathcal{Q}}\rho$ is the partial trace on $\mathcal{Q}.$
When $\mathcal{E}_k$ stands for the unilocal von Neumann measurement $\{I\otimes M_k\}$, $\rho_{jk}$ can be pure, and we write $\rho_{jk}=\ket{\xi_{jk}}\bra{\xi_{jk}}$. 
\begin{align}
\mu^{\downarrow}(\ket{\eta_j})\prec \sum_kt_{jk}\mu^{\downarrow}(\rho_{jk}),\label{2mu}
\end{align}
The above equality is due to the Theorem 1 in \cite{jonathan1999minimal}.
\par
 Next let $\ket{\chi_k}$ be a pure state with 
\begin{align}
\mu^{\downarrow}(\ket{\chi_k})=\sum_{j}\frac{r_jt_{jk}}{m_k}\mu^{\downarrow}(\ket{\xi_{jk}}),\label{3mu}
\end{align}
 here $m_k=\sum_{j}r_jt_{jk},$ then 
 \begin{align}
 \ket{\chi_k}\stackrel{LOCC}{\rightarrow}\rho_k,\label{3m}
 \end{align}
 this is due to (\ref{3mu}) and Theorem 1 in \cite{jonathan1999minimal}.\par
   Combing the equality (\ref{1mu}), (\ref{2mu}) and (\ref{3mu}), we have 
 \begin{align}
 \mu^{\downarrow}(\ket{\psi})\prec\sum_k m_k\mu^{\downarrow}(\ket{\chi_k}). \label{4mu}
 \end{align} 
 As $\mathcal{E}_k$ is linear, we have $m_k=q_k,$ that is, 
 \begin{align*}
 \ket{\psi}\stackrel{LOCC}{\rightarrow}\{q_k,\ket{\chi_k}\},
 \end{align*}
 then under the results for pure states, we have 
 \begin{align}
 \overline{E}(\ket{\psi})\ge \sum_k q_k\overline{E}(\ket{\chi_k}), \label{5mu}
\end{align}
then we have
 \begin{align*}
 \overline{E}(\rho)=&\overline{E}(\ket{\psi})\nonumber\\
 \ge &\sum_k q_k\overline{E}(\chi_k)\nonumber\\
 \ge&\sum_k q_k\overline{E}(\rho_k),
 \end{align*}
here  the first inequality is due to the assumption of $\ket{\psi},$ the second inequality is due to (\ref{5mu}), the third inequality is due to (\ref{3m}) and the definition of $\overline{E}$.
	\par 
 Assume $\{q_k, \rho_k\}$ is a decomposition of $\rho_{AB},$ let $\ket{\psi_k}$ be the optimal pure state for $\rho_k$ in terms of the entanglement measure $E,$ and let $\{q_{kl},\ket{\theta_{kl}}\}$ be the corresponding decomposition, by the results in \cite{jonathan1999minimal}, we have 
 \begin{align*}
 \sum_{l}q_{kl}\mu^{\downarrow}(\ket{\theta_{kl}})\succ&\mu^{\downarrow}(\ket{\psi_k}),\\
 \sum_{kl}q_kq_{kl}\mu^{\downarrow}(\ket{\theta_{kl}})\succ&\sum_kq_k\mu^{\downarrow}((\ket{\psi_k}))
 \end{align*} 
 let $\ket{\psi}$ be the pure state such that $\mu^{\downarrow}(\ket{\psi})=\sum_{k}q_k\mu^{\downarrow}((\ket{\psi_k}),$ then we have
 \begin{align}
 \sum_{kl}q_kq_{kl}\mu^{\downarrow}(\ket{\theta_{kl}})\succ\mu^{\downarrow}(\ket{\psi}),
 \end{align}
the due to the result in \cite{jonathan1999minimal}, we have $\ket{\psi}$ can be transformed into $\rho$ under LOCC, then
 \begin{align}
 \overline{E}(\rho)\le E(\ket{\psi})\le \sum_kq_k\overline{E}(\rho_k),
 \end{align}
 here the first inequality is due to the definition of $\overline{E},$ the second inequality is due to the property of $f.$
\end{proof}
\par
From the proof of the above theorem, we may have the following result, it tells us that when we consider the entanglement measure in (1) and $\mathcal{T}$ is LOCC, we could decrease the size of the set of $\rho_{AB}.$
\begin{Theorem}
	Assume that $\rho_{AB}$ is a mixed state in $\mathcal{H}_{AB},$ $E$ is entanglement monotone for pure states, then we have that 
	\begin{align}
	\overline{E}(\rho_{AB})=\inf_{\ket{\psi}_{AB}\in \mathcal{O}(\rho_{AB})}E(\ket{\psi}_{AB}), \label{evt}
	\end{align}
	where we denote $\mathcal{O}(\rho_{AB})$ is the subset of $\mathcal{R}(\rho_{AB})$ with its element $\ket{\psi}$ satisfying $\mu^{\downarrow}(\ket{\psi}_{AB})=\sum_i p_i\mu^{\downarrow}(\ket{\phi_i}_{AB}),$ here $\{p_i,\ket{\phi_i}_{AB}\}$ is a decomposition of $\rho_{AB}.$
\end{Theorem}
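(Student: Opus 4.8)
The plan is to establish the two opposing inequalities whose combination gives the stated equality. Because $\mathcal{O}(\rho_{AB})\subseteq\mathcal{R}(\rho_{AB})$ by definition, the infimum over the larger set $\mathcal{R}(\rho_{AB})$ is no larger than that over $\mathcal{O}(\rho_{AB})$, so $\overline{E}(\rho_{AB})\le\inf_{\ket{\psi}\in\mathcal{O}(\rho_{AB})}E(\ket{\psi})$ is immediate. All the content lies in the reverse inequality, and for this it suffices to exhibit, for each $\ket{\psi}\in\mathcal{R}(\rho_{AB})$, a companion state $\ket{\psi'}\in\mathcal{O}(\rho_{AB})$ with $E(\ket{\psi'})\le E(\ket{\psi})$; this construction mirrors the one already used in the proof of Theorem~\ref{th1}.

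First I would fix an arbitrary $\ket{\psi}\in\mathcal{R}(\rho_{AB})$, so that $\ket{\psi}\stackrel{LOCC}{\rightarrow}\rho_{AB}$. By Theorem~1 of \cite{jonathan1999minimal}, this conversion is equivalent to the existence of a decomposition $\{p_i,\ket{\phi_i}_{AB}\}$ of $\rho_{AB}$ for which $\mu^{\downarrow}(\ket{\psi})\prec\sum_i p_i\mu^{\downarrow}(\ket{\phi_i})$. I would then let $\ket{\psi'}$ be a pure state with Schmidt vector exactly $\mu^{\downarrow}(\ket{\psi'})=\sum_i p_i\mu^{\downarrow}(\ket{\phi_i})$. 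The right-hand side is a bona fide Schmidt vector, since its entries are nonnegative and sum to one and its length does not exceed $\min\{\dim\mathcal{H}_A,\dim\mathcal{H}_B\}$, so such a $\ket{\psi'}$ indeed exists in $\mathcal{H}_{AB}$.

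Next I would verify that $\ket{\psi'}\in\mathcal{O}(\rho_{AB})$. Its Schmidt vector has the defining form of an element of $\mathcal{O}(\rho_{AB})$ with respect to the same decomposition $\{p_i,\ket{\phi_i}_{AB}\}$, and by reflexivity of majorization we have $\mu^{\downarrow}(\ket{\psi'})\prec\sum_i p_i\mu^{\downarrow}(\ket{\phi_i})$, so a second application of \cite{jonathan1999minimal} yields $\ket{\psi'}\stackrel{LOCC}{\rightarrow}\rho_{AB}$, i.e. $\ket{\psi'}\in\mathcal{R}(\rho_{AB})$; hence $\ket{\psi'}\in\mathcal{O}(\rho_{AB})$. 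Finally, from $\mu^{\downarrow}(\ket{\psi})\prec\mu^{\downarrow}(\ket{\psi'})$ and the single-outcome case of \cite{jonathan1999minimal} (Nielsen's theorem) we obtain $\ket{\psi}\stackrel{LOCC}{\rightarrow}\ket{\psi'}$, and since $E$ is an entanglement monotone for pure states it cannot increase under this deterministic transformation, giving $E(\ket{\psi'})\le E(\ket{\psi})$. Passing to the infimum over $\ket{\psi}\in\mathcal{R}(\rho_{AB})$ then gives $\inf_{\ket{\psi}\in\mathcal{O}(\rho_{AB})}E(\ket{\psi})\le\overline{E}(\rho_{AB})$, and equality follows.

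The step I expect to require the most care is the bookkeeping around $\ket{\psi'}$: one must be sure that the averaged vector $\sum_i p_i\mu^{\downarrow}(\ket{\phi_i})$ is realizable as the Schmidt vector of a pure state living in the given $\mathcal{H}_{AB}$ (in particular that the ambient dimension accommodates the needed Schmidt rank), and one must keep the two invocations of the majorization criterion pointed in the correct direction, once to read off a decomposition from $\ket{\psi}$ and once to certify that $\ket{\psi'}$ itself converts to $\rho_{AB}$. Once these are in place the monotonicity step is routine, because by Nielsen's theorem the non-increase of $E$ under LOCC on pure states is precisely Schur-concavity of $E$ in the Schmidt vector, which is exactly what $\mu^{\downarrow}(\ket{\psi})\prec\mu^{\downarrow}(\ket{\psi'})$ triggers.
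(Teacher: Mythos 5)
Your proof is correct and takes essentially the same approach as the paper's: the easy inequality from $\mathcal{O}(\rho_{AB})\subseteq\mathcal{R}(\rho_{AB})$, then for the reverse direction the Jonathan--Plenio majorization criterion to extract a decomposition of $\rho_{AB}$, the construction of $\ket{\psi'}$ with the averaged Schmidt vector, and Nielsen's theorem plus pure-state monotonicity of $E$. The only difference is that you run the argument for an arbitrary $\ket{\psi}\in\mathcal{R}(\rho_{AB})$ and pass to the infimum, whereas the paper assumes the infimum is attained by an optimal $\ket{\psi}$; your explicit check that $\ket{\psi'}\in\mathcal{O}(\rho_{AB})$ is slightly more careful, but the route is the same.
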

\begin{proof}
	By the definition of $\overline{E}(\rho_{AB})$, we have that $\overline{E}(\rho_{AB})\le \inf_{\ket{\psi}_{AB}\in \mathcal{O}(\rho_{AB})}E(\ket{\psi}_{AB}).$ Then we prove the other side of $(\ref{evt}).$ Assume that $\ket{\psi}_{AB}$ is an optimal pure state for the state $\rho_{AB}$ in terms of $\overline{E},$ by the similar analysis in Theorem \ref{th1}, then we have there exists a decomposition $\{p_i,\ket{\phi_i}_{AB}\}$ of $\rho_{AB}$ such that
	\begin{align}
\mu^{\downarrow}(\ket{\psi}_{AB})\prec\sum_i p_i\mu^{\downarrow}(\ket{\phi_i}_{AB}),\label{ode}
\end{align}
the equality (\ref{ode}) is due to the result in \cite{jonathan1999minimal}.
Next if we take $\ket{\psi^{'}}$ with $\mu^{\downarrow}(\ket{\psi^{'}})=\sum_i p_i\mu^{\downarrow}(\ket{\phi_i}_{AB})$, combing with $(\ref{ode}),$ we have that $\mu^{\downarrow}(\ket{\psi}_{AB})\prec\mu^{\downarrow}(\ket{\psi^{'}}_{AB}),$ combing the result in \cite{nielsen1999conditions}, and the definition of $\overline{E},$ we finish the proof. 
\end{proof}
\par Then we make a comparision of $\overline{E}$ with $E_f$ for a mixed state $\rho_{AB}$.
\begin{Theorem}
	Assume that $E$ is an entanglement measure for pure states in $\mathcal{H}_{AB},$ and  $\overline{E}$ is an entanglement measure for a mixed state defined as (\ref{Ev}), then $\overline{E}$ is convex if and only if $\overline{E}=E_f$
\end{Theorem}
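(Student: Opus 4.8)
The plan is to prove the equivalence by splitting it into the two inequalities $E_f\le\overline E$ and $\overline E\le E_f$, establishing the first unconditionally and the second only under the convexity hypothesis, and then reading off the two directions of the ``if and only if.'' The organising observation I would record first is that both quantities coincide with $E$ on pure states. For a pure $\ket{\phi}$ the identity channel shows $\ket{\phi}\in\mathcal{R}(\ket{\phi})$, so $\overline E(\ket{\phi})\le E(\ket{\phi})$; conversely any $\ket{\psi}$ with $\ket{\psi}\stackrel{LOCC}{\rightarrow}\ket{\phi}$ satisfies $E(\ket{\psi})\ge E(\ket{\phi})$ by condition (ii) for $E$, so $\overline E(\ket{\phi})=E(\ket{\phi})$; and $E_f(\ket{\phi})=E(\ket{\phi})$ holds by definition of the convex roof.

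Next I would prove the universal inequality $E_f(\rho)\le\overline E(\rho)$, valid with no convexity assumption. For any $\ket{\psi}\in\mathcal{R}(\rho)$ the definition gives $\rho=\Lambda(\ketbra{\psi}{\psi})$ for some LOCC map $\Lambda$, i.e. $\ketbra{\psi}{\psi}\stackrel{LOCC}{\rightarrow}\rho$. Since the convex-roof measure $E_f$ is an entanglement monotone (Vidal), it cannot increase under LOCC, whence $E_f(\rho)\le E_f(\ketbra{\psi}{\psi})=E(\ket{\psi})$. Taking the infimum over $\ket{\psi}\in\mathcal{R}(\rho)$ yields $E_f(\rho)\le\overline E(\rho)$ for every $\rho$. (The preceding theorem, reducing the infimum to $\mathcal{O}(\rho)$, gives an alternative route via the transformation criterion of \cite{jonathan1999minimal}, but it is not needed here.)

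For the forward direction ($\overline E$ convex $\Rightarrow\overline E=E_f$), fix $\rho$ and let $\{p_i,\ket{\phi_i}\}$ be an optimal decomposition for $E_f$, so $\rho=\sum_i p_i\ketbra{\phi_i}{\phi_i}$ and $E_f(\rho)=\sum_i p_i E(\ket{\phi_i})$. Convexity of $\overline E$, together with the pure-state identity $\overline E(\ketbra{\phi_i}{\phi_i})=E(\ket{\phi_i})$, gives
\[
\overline E(\rho)\le\sum_i p_i\,\overline E(\ketbra{\phi_i}{\phi_i})=\sum_i p_i E(\ket{\phi_i})=E_f(\rho),
\]
which combined with the universal inequality forces $\overline E=E_f$. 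For the converse I would simply invoke that the convex roof $E_f$ is itself convex: given decompositions of $\sigma_1$ and $\sigma_2$, their weighted union is a decomposition of $\lambda\sigma_1+(1-\lambda)\sigma_2$, so $E_f(\lambda\sigma_1+(1-\lambda)\sigma_2)\le\lambda E_f(\sigma_1)+(1-\lambda)E_f(\sigma_2)$; hence $\overline E=E_f$ is automatically convex.

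The main obstacle is the universal inequality $E_f\le\overline E$, the only step that is not a one-line manipulation: it rests on $E_f$ genuinely being LOCC-monotone, which in turn requires the $U$-invariance and concavity hypotheses on $f$ under which Vidal's construction delivers a monotone, and on the fact that every pure preimage in $\mathcal{R}(\rho)$ can indeed be driven to $\rho$ by LOCC (immediate from the definition of $\mathcal{R}$ with $\mathcal{T}$ the LOCC class). A secondary point to handle carefully is that the infimum defining $\overline E$ need not be attained, so the forward direction must be phrased using an optimal \emph{$E_f$}-decomposition (which exists, the minimum in $E_f$ being over an effectively compact family by a Carath\'eodory bound) rather than an optimal preimage for $\overline E$.
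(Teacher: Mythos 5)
Your proposal is correct and follows essentially the same route as the paper: the unconditional inequality $E_f\le\overline E$, the bound $\overline E(\rho)\le\sum_j q_j\overline E(\ketbra{\theta_j}{\theta_j})=E_f(\rho)$ applied to an optimal $E_f$-decomposition when $\overline E$ is convex, and the converse from the built-in convexity of the convex roof. The only cosmetic difference is that you prove $E_f\le\overline E$ directly from LOCC-monotonicity of $E_f$ (correctly flagging that this needs Vidal's hypotheses on $f$), whereas the paper simply cites the maximality result of Gour--Tomamichel for the extension $\overline E$.
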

\begin{proof}
	As when $\rho_{AB}=\ket{\psi}_{AB}\bra{\psi}$ is a pure state in $\mathcal{H}_{AB},$ $E_f(\ket{\psi}_{AB})=\overline{E}(\ket{\psi}),$ then by the result in \cite{gour2020optimal}, we have that $\overline{E}(\rho_{AB})\ge E_f(\rho_{AB}).$ On the other hand, from the definition of the $E_f,$ $E_f$ is convex. Assume that $\{q_j,\ket{\theta_j}\}$ is the optimal decomposition of $\rho_{AB}$ in terms of $E_f,$ then we have that 
	\begin{align}
	\overline{E}(\rho_{AB})\le &\sum_jq_jE(\ket{\theta_j})\nonumber\\
	=&E_f(\rho_{AB}),
	\end{align}
	the inequality is due to the convexity of $\overline{E},$
	then we finish the proof.
\end{proof}\par
Then we present a condition when $\overline{E} $ is convex.
	\begin{Theorem}
		Assume $\rho$ is a bipartite entangled state, and $\rho$ can be written as $\rho=p_1\sigma_1\oplus p_2\sigma_2,$ here $p_1\sigma_1\oplus p_2\sigma_2$ means that $\supp(\sigma_1)\cap\supp(\sigma_2)=\emptyset,$ $i.e.$ 
	$\rho=\begin{pmatrix} p_1\sigma_1 &  \\  & p_2\sigma_2 \end{pmatrix}.$ And let $E$ be an entanglement measure for a bipartite pure state $\ket{\phi}\in \mathcal{H}_{AB}$, $E(\ket{\phi})=f(\tr_B\ket{\phi}\bra{\phi})$, if $f$ is convex, then $\overline{E}(\rho)\le p_1\overline{E}(\sigma_1)+p_2\overline{E}(\sigma_2).$ 
	\end{Theorem}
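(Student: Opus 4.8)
The plan is to exhibit a single pure state $\ket{\psi}$ lying in $\mathcal{R}(\rho)$ whose value $E(\ket{\psi})$ is bounded above by $p_1\overline{E}(\sigma_1)+p_2\overline{E}(\sigma_2)$; the desired inequality then follows at once from the definition (\ref{Ev}) of $\overline{E}$ as an infimum over $\mathcal{R}(\rho)$. First I would fix pure states $\ket{\psi_1},\ket{\psi_2}$ that are optimal (or $\epsilon$-optimal, should the infimum fail to be attained) for $\sigma_1,\sigma_2$, so that there are operations $\Lambda_1,\Lambda_2\in\mathcal{T}$ with $\Lambda_i(\ket{\psi_i}\bra{\psi_i})=\sigma_i$ and $\overline{E}(\sigma_i)=E(\ket{\psi_i})=f(\tr_B\ket{\psi_i}\bra{\psi_i})$. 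Writing $\mu^{\downarrow}(\ket{\psi_i})$ for the ordered Schmidt vector of $\ket{\psi_i}$, the task reduces to gluing $\ket{\psi_1}$ and $\ket{\psi_2}$ into one pure state that can be routed back to $\rho=p_1\sigma_1+p_2\sigma_2$ by an operation in $\mathcal{T}$ (recall that the block form $p_1\sigma_1\oplus p_2\sigma_2$ equals the operator sum $p_1\sigma_1+p_2\sigma_2$, the orthogonality being inherited from the given $\sigma_i$).

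Next I would take $\ket{\psi}$ to be the pure state whose ordered Schmidt vector is the decreasing rearrangement of the concatenation of $p_1\mu^{\downarrow}(\ket{\psi_1})$ and $p_2\mu^{\downarrow}(\ket{\psi_2})$. The elementary lemma I need is that the decreasing rearrangement of the union of two nonnegative sorted vectors is majorized by their componentwise sum, i.e. $\mu^{\downarrow}(\ket{\psi})\prec p_1\mu^{\downarrow}(\ket{\psi_1})+p_2\mu^{\downarrow}(\ket{\psi_2})$; this follows because the $k$ largest entries of the union are drawn from the top entries of the two constituents and hence sum to at most $\sum_{i=1}^{k}\left(p_1[\mu^{\downarrow}(\ket{\psi_1})]_i+p_2[\mu^{\downarrow}(\ket{\psi_2})]_i\right)$, which is precisely the $k$-th partial sum of the already ordered right-hand vector. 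By the transformation criterion of \cite{jonathan1999minimal} this majorization guarantees the LOCC transformation $\ket{\psi}\stackrel{LOCC}{\rightarrow}\{p_1,\ket{\psi_1};p_2,\ket{\psi_2}\}$. Composing this with the classically controlled channel that applies $\Lambda_1$ on the first branch and $\Lambda_2$ on the second produces a map $\Lambda\in\mathcal{T}$, using $\text{LOCC}\subseteq\mathcal{T}$ together with the closure of $\mathcal{T}$ under composition and classically controlled mixtures, such that $\Lambda(\ket{\psi}\bra{\psi})=p_1\sigma_1+p_2\sigma_2=\rho$. Hence $\ket{\psi}\in\mathcal{R}(\rho)$ and $\overline{E}(\rho)\le E(\ket{\psi})$.

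It then remains to bound $E(\ket{\psi})=f(\tr_B\ket{\psi}\bra{\psi})$. The reduced state $\tr_B\ket{\psi}\bra{\psi}$ carries exactly the eigenvalues appearing in the concatenation, so by the unitary invariance of $f$ it has the same value as $f$ evaluated on the block-diagonal matrix $p_1\diag(\mu^{\downarrow}(\ket{\psi_1}))\oplus p_2\diag(\mu^{\downarrow}(\ket{\psi_2}))$. Writing this as the convex combination $p_1 P_1+p_2 P_2$ with $P_1=\diag(\mu^{\downarrow}(\ket{\psi_1}))\oplus 0$ and $P_2=0\oplus\diag(\mu^{\downarrow}(\ket{\psi_2}))$, convexity of $f$ gives $f(p_1 P_1+p_2 P_2)\le p_1 f(P_1)+p_2 f(P_2)$. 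Finally, since appending zero eigenvalues leaves $f$ unchanged (this holds for the measures $E_k$ and is the natural consistency requirement on $f$ across dimensions), $f(P_i)=f(\tr_B\ket{\psi_i}\bra{\psi_i})=E(\ket{\psi_i})$, whence $\overline{E}(\rho)\le p_1 E(\ket{\psi_1})+p_2 E(\ket{\psi_2})=p_1\overline{E}(\sigma_1)+p_2\overline{E}(\sigma_2)$; letting $\epsilon\to 0$ where needed finishes the argument.

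I expect the main obstacle to be the second step: certifying that the glued state $\ket{\psi}$ genuinely lies in $\mathcal{R}(\rho)$. This rests simultaneously on the union-versus-sum majorization lemma fed into the Jonathan--Plenio criterion, and on the observation that the two-stage procedure (LOCC branching followed by the branch-dependent $\Lambda_i$) remains inside the chosen class $\mathcal{T}$ for each of LOCC, SEP, PPT and SEPP. By contrast, the convexity estimate of the third step is comparatively routine, the only delicate point being the implicit hypothesis that $f$ be stable under padding with zero eigenvalues, which I would record explicitly as a standing assumption on $f$.
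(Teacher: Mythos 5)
Your proof is correct and follows essentially the same route as the paper: both construct the glued pure state whose Schmidt spectrum is the concatenation of $p_1\mu^{\downarrow}(\ket{\psi_1})$ and $p_2\mu^{\downarrow}(\ket{\psi_2})$ (the paper writes it as the superposition $\sqrt{p_1}\ket{\phi_1}+\sqrt{p_2}\ket{\phi_2}$ using the disjointness of supports), certify via the union-versus-sum majorization and the Jonathan--Plenio criterion that it can be transformed to $\rho$, and then apply convexity of $f$ to the block-diagonal reduced state. Your execution is slightly tidier in that you route the glued state through the two-branch ensemble $\{p_i,\ket{\psi_i}\}$ and compose with the optimal maps $\Lambda_i$ rather than chaining majorizations through pure-state decompositions of the $\sigma_i$, you give an elementary proof of the union-versus-sum majorization where the paper appeals to Ky Fan, and you make explicit the zero-padding stability of $f$ that the paper leaves implicit.
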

\begin{proof}
	Assume that $\ket{\phi_i}$ is the optimal pure state for $\sigma_i$ in terms of the entanglemennt measure $\overline{E},$ $i=1,2,$ then there exists a decomposition $\{q_k,\ket{\varphi_k^i}\}$ of $\sigma_i$ such that $\mu^{\downarrow}(\ket{\phi_i})\prec\sum_k q_k\mu^{\downarrow}(\ket{\varphi_k^i})$, $\sum_{i=1}^{2}p_i\mu^{\downarrow}(\ket{\phi_i})\prec\sum_{i=1}^{2}\sum_k p_iq_k\mu^{\downarrow}(\ket{\varphi_k^i})$. Next by the Ky-Fan's maximum principle \cite{bhatia2013matrix}, we have that $f_k$ is a concave function, and combing $\supp(\sigma_1)\cap\supp(\sigma_2)=\emptyset,$ we have $E_k(\sqrt{p_1}\ket{\phi_1}+\sqrt{p_2}\ket{\phi_2})\ge p_1E_k(\ket{\phi_1})+p_2E_k(\ket{\phi_2})$, $k=1,2,3,\cdots,d,$ then $\mu^{\downarrow}(\sqrt{p_1}\ket{\phi_1}+\sqrt{p_2}\ket{\phi_2})\prec\sum_{i=1}^{2}\sum_k p_iq_k\mu^{\downarrow}(\ket{\varphi_k^i}),$ $\sqrt{p_1}\ket{\phi_1}+\sqrt{p_2}\ket{\phi_2}\stackrel{LOCC}{\longrightarrow}\rho$. Next as we assume $f$ is convex, then we have 
	\begin{align}
	&p_1\overline{E}(\sigma_1)+p_2\overline{E}(\sigma_2)\nonumber\\
	\ge &E(\sqrt{p_1}\ket{\phi_1}+\sqrt{p_2}\ket{\phi_2})\nonumber\\
	\ge &\overline{E}(\rho),
	\end{align}  
	the first inequality is due to the convexity of the function $f,$ the second inequality is due to the definition of $\overline{E}$ in (\ref{Ev}).
\end{proof}

\par An important property of entanglement measure is additivity, it means that $\forall\sigma\in\mathcal{H}_{AB},$ $E(\sigma^{\otimes n})=n E(\sigma),$ if $E(\sigma^{\otimes n})\le n E(\sigma),$ we say  $E$ is subadditivity. Unfortunely, this property is not always valid for many prominent entanglement measures, such as, entanglement of formation \cite{bennett1996mixed}, robustness of entanglement of entanglement \cite{vidal1999robustness}, relative entropy of entanglement \cite{vedral1998entanglement,vedral2002role}. Moreover, the relative entropy of entanglement is additivity for pure states, while it is subadditivity for mixed states. Here we present a condition when $\overline{E}$ is weak subadditivity.
\begin{Theorem}
	Assume $E$ is an entanglement  measure for a pure state $\ket{\psi}_{AB}$ in $\mathcal{H}_{AB}$, and $E$ is subadditive for pure states. When $\rho$ is a bipartite mixed state on $\mathcal{H}_{AB},$
	\begin{align}
	\overline{E}(\rho^{\otimes n})\le n \overline{E}(\rho),
\end{align}
\end{Theorem}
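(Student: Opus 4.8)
The plan is to read off the result directly from the definition (\ref{Ev}) of $\overline{E}$ as an infimum over pure states reachable by an operation in $\mathcal{T}$, combined with the single structural fact that $\mathcal{T}$ is closed under tensor products. First I would fix an arbitrary $\ket{\psi}_{AB}\in\mathcal{R}(\rho_{AB})$, so that there is some $\Lambda\in\mathcal{T}$ with $\rho_{AB}=\Lambda(\ket{\psi}_{AB}\bra{\psi})$. Taking the $n$-fold tensor power gives $\rho_{AB}^{\otimes n}=\Lambda^{\otimes n}\big(\ket{\psi}^{\otimes n}\bra{\psi}^{\otimes n}\big)$, now viewed across the bipartition that groups all $A$-systems with Alice and all $B$-systems with Bob. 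Since $\Lambda^{\otimes n}$ again lies in $\mathcal{T}$, this shows $\ket{\psi}^{\otimes n}\in\mathcal{R}(\rho_{AB}^{\otimes n})$.

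With this containment in hand the rest is a one-line chaining of inequalities. For every $\ket{\psi}_{AB}\in\mathcal{R}(\rho_{AB})$ we obtain
\[
\overline{E}(\rho_{AB}^{\otimes n})\le E\big(\ket{\psi}^{\otimes n}\big)\le n\,E(\ket{\psi}_{AB}),
\]
where the first inequality is the definition of $\overline{E}$ as an infimum applied to the element $\ket{\psi}^{\otimes n}\in\mathcal{R}(\rho_{AB}^{\otimes n})$, and the second is exactly the assumed subadditivity of $E$ on pure states, $E(\sigma^{\otimes n})\le n\,E(\sigma)$. Taking the infimum over all $\ket{\psi}_{AB}\in\mathcal{R}(\rho_{AB})$ on the right-hand side then yields $\overline{E}(\rho_{AB}^{\otimes n})\le n\,\overline{E}(\rho_{AB})$, which is the claim. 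Note that because the argument works uniformly for every competitor $\ket{\psi}_{AB}$ and only afterwards passes to the infimum, I never need the infimum in (\ref{Ev}) to be attained, so no compactness or optimizer-existence discussion is required.

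The main obstacle, and really the only nontrivial point, is the very first step: justifying that $\Lambda^{\otimes n}\in\mathcal{T}$ and that $\ket{\psi}^{\otimes n}$ is a legitimate element of $\mathcal{R}(\rho_{AB}^{\otimes n})$ for the regrouped bipartition $(A^{\otimes n}:B^{\otimes n})$. For $\mathcal{T}=\mathrm{LOCC}$ this is standard, since running the protocol $\Lambda$ independently on each of the $n$ copies is itself an LOCC protocol between the combined parties. The corresponding closure under tensor products for SEP and PPT follows from the multiplicativity of the Kraus product-form and of the partial-transpose positivity condition $J_{\Lambda_1\otimes\Lambda_2}^{T}=J_{\Lambda_1}^{T}\otimes J_{\Lambda_2}^{T}\ge 0$, so the same proof covers those choices of $\mathcal{T}$ as well. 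Once this closure is recorded, the theorem follows immediately from the two inequalities above.
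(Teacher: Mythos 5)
Your proof is correct and follows essentially the same route as the paper's: tensor the pure-state preparation of $\rho$ to obtain a preparation of $\rho^{\otimes n}$, then invoke subadditivity of $E$ on pure states. Your version is in fact slightly cleaner, since by taking the infimum only at the end you avoid the paper's implicit assumption that an optimal pure state for $\rho$ is attained, and you make explicit the closure of $\mathcal{T}$ under tensor products that the paper uses tacitly.
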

\begin{proof}
	Assume $\ket{\psi}$ is the optimal pure state for a mixed state $\rho$ in terms of $\overline{E},$ then $\rho^{\otimes n}$ can be transformed into $\ket{\psi}^{\otimes n}$ by LOCC. Due to the definition of $\overline{E},$ we have 
	\begin{align}
	\overline{E}(\rho^{\otimes n}) \le& \overline{E}(\ket{\psi}^{\otimes n})\nonumber\\
	\le&n\overline{E}(\ket{\psi})\nonumber\\=&n\overline{E}(\rho).
	\end{align} 
	The first inequality is due to the definition of $\overline{E}$ in (\ref{Ev}), the second inequality is due to the subadditivity of $E$ for pure states, the first equality is due to the assumption that $\ket{\psi}$ is the optimal for $\rho$ in terms of $\overline{E}$.
\end{proof}
\section{An operational interpretation of $\overline{E}$}
\indent In this section, we will present the interpretation of the smoothed quantum entanglement cost built under the method here. Entanglement cost (entanglement distillation) means the optimal rate $r$ from (to)  the two qubit maximally entangled state
\begin{align}
\Psi_r=\frac{1}{r}\sum_{i=0}^{r-1}\sum_{j=0}^{r-1}\ket{ii}\bra{jj},
\end{align}
\begin{definition}\cite{buscemi2011entanglement}
	Assume $\rho\in \mathcal{D}(\mathcal{H}_{AB})$, its one-shot entanglement cost is defined as
	\begin{align}
	E_{c,1}(\rho)=\log\inf_{\Lambda\in LOCC}\{r|\Lambda(\Psi_r)=\rho\},
	\end{align} 
	 the smoothed entanglement cost is defined as
	\begin{align}
	E_{c,1}^{\epsilon}(\rho)=\inf_{\overline{\rho}\in B_{\epsilon}(\rho)} E_{c,1}(\overline{\rho}),
	\end{align}
	here $B_{\epsilon}(\rho)=\{\overline{\rho}|\frac{1}{2}||\rho-\overline{\rho}||\le \epsilon\}$,  $||A||=\tr\sqrt{A^{\dagger}A}$.
		Its one-shot smoothed entanglement distillation is defined as
	\begin{align}
	E^{\epsilon}_{d,1}(\rho)=\log\sup_{\Lambda\in LOCC}\{r|\frac{1}{2}||\Lambda(\rho)-\Psi_r||\le \epsilon\}
	\end{align}
	\end{definition}
Next we recall the definition of entanglement formation and entanglement cost.
\begin{definition}\cite{bennett1996mixed}
	Assume $\ket{\psi}_{AB}\in \mathcal{H}_{AB},$ its entanglement of formation is defined as
	\begin{align}
E_f(\ket{\psi})=S(\tr_B\ket{\psi}_{AB}\bra{\psi}),
	\end{align}
	here $S(\rho)=-\tr\rho\log\rho.$\par
When $\rho\in\mathcal{D}(\mathcal{H}_{AB})$ is a mixed state, its entanglement of formation is defined by the convex roof extended method,
\begin{align}
E_f(\rho)=\min_{\{p_i,\ket{\psi}\}}\sum_ip_iE_f(\ket{\psi}),
\end{align}
where the minimization takes over all the decompositions $\{p_i,\ket{\psi}\}$ of $\rho$ such that $\rho=\sum_ip_i\ket{\psi}\bra{\psi}.$\par
Assume $\rho\in \mathcal{D}(\mathcal{H}_{AB})$, its entanglement cost is 
\begin{align}
E_c(\rho)=\log\inf\{r|\lim\limits_{n\rightarrow\infty}\inf_{\Lambda\in LOCC}\tr|\rho^{\otimes n}-\Lambda(\Psi_r)|=0\}\label{ec}
\end{align}
\end{definition}\par
Next we present the similar smoothed entanglement measure defined in (\ref{Ev}).
\begin{definition}\cite{regula2019one}
	Assume $\rho_{AB}\in \mathcal{D}(\mathcal{H}_{AB})$, $\overline{E}$ is an entanglement measure defined in $(\ref{Ev}),$ the smoothed extension of $\overline{E},$ $\overline{E}^{\epsilon}$ is defined as
	\begin{align}
	\overline{E}^{\epsilon}(\rho)=&\inf_{\overline{\rho}\in B_{\epsilon}(\rho)}\overline{E}(\overline{\rho})\nonumber\\
	=&\inf_{\overline{\rho}\in B_{\epsilon}(\rho)}\inf\{E(\ket{\psi})|\Lambda(\ket{\psi})=\overline{\rho},\Lambda\in LOCC\}.
	\end{align}
	Here we restrict $\mathcal{T}$ defined in $(\ref{Ev})$ to be LOCC, and the second inf takes over all the pure states $\ket{\psi}$ such that $\Lambda(\ket{\psi})=\overline{\rho},$ $\Lambda\in LOCC.$
\end{definition}
\par When we take $E$ for pure states as entanglement cost, we have the following theroem.
\begin{Theorem}
	Assume $\rho_{AB}\in\mathcal{D}(\mathcal{H}_{AB})$, $\epsilon\in(0,1)$, then $\overline{E_{c,1}^{\epsilon}}(\rho)=E^{\epsilon}_{c,1}(\rho).$
\end{Theorem}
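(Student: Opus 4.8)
The plan is to reduce the smoothed statement to the unsmoothed pointwise identity $\overline{E_{c,1}}(\sigma)=E_{c,1}(\sigma)$ for every $\sigma\in\mathcal{D}(\mathcal{H}_{AB})$, and then to observe that the smoothing operation is applied in exactly the same way to both sides. Indeed, once $\overline{E_{c,1}}=E_{c,1}$ is established, the definition of the smoothed extension preceding the theorem gives $\overline{E_{c,1}^{\epsilon}}(\rho)=\inf_{\overline{\rho}\in B_{\epsilon}(\rho)}\overline{E_{c,1}}(\overline{\rho})=\inf_{\overline{\rho}\in B_{\epsilon}(\rho)}E_{c,1}(\overline{\rho})=E_{c,1}^{\epsilon}(\rho)$, which is the claim. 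So the entire content lies in the unsmoothed identity, which I would prove by two inequalities.

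For $\overline{E_{c,1}}(\rho)\ge E_{c,1}(\rho)$, I would take an arbitrary witness $\ket{\psi}\in\mathcal{R}(\rho)$, so that $\Lambda(\ket{\psi}\bra{\psi})=\rho$ for some $\Lambda\in LOCC$, and let $s$ be any value admissible for $\ket{\psi}$, i.e.\ $\Lambda'(\Psi_s)=\ket{\psi}\bra{\psi}$ for some $\Lambda'\in LOCC$. Composing, $(\Lambda\circ\Lambda')(\Psi_s)=\rho$ with $\Lambda\circ\Lambda'\in LOCC$, so every such $s$ is admissible in the definition of $E_{c,1}(\rho)$; passing to the infimum over $s$ yields $E_{c,1}(\rho)\le E_{c,1}(\ket{\psi})$, and then the infimum over $\ket{\psi}\in\mathcal{R}(\rho)$ gives $E_{c,1}(\rho)\le\overline{E_{c,1}}(\rho)$. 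For $\overline{E_{c,1}}(\rho)\le E_{c,1}(\rho)$, the key observation is that the resource state $\Psi_r$ is itself a pure state. Given any admissible $r$ with $\Lambda(\Psi_r)=\rho$, the state $\Psi_r$ is a legitimate element of $\mathcal{R}(\rho)$, so $\overline{E_{c,1}}(\rho)\le E_{c,1}(\Psi_r)$. It then remains to compute $E_{c,1}(\Psi_r)=\log r$: trivially $\Psi_r$ is obtained from itself, giving $E_{c,1}(\Psi_r)\le\log r$, while no $\Psi_s$ with $s<r$ can be converted to $\Psi_r$ by LOCC since LOCC cannot increase the Schmidt rank, giving the matching lower bound. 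Hence $\overline{E_{c,1}}(\rho)\le\log r$ for every admissible $r$, and the infimum over $r$ yields $\overline{E_{c,1}}(\rho)\le E_{c,1}(\rho)$.

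The step I expect to require the most care is the $\le$ direction, specifically the bookkeeping of Hilbert-space dimensions: $\Psi_r$ a priori lives on a reference pair of $r$-dimensional systems rather than on $\mathcal{H}_{AB}$, so to regard it as an element of $\mathcal{R}(\rho_{AB})$ one must either enlarge $\mathcal{H}_{AB}$ by padding with zeros or absorb the embedding into the LOCC map $\Lambda$; I would make this identification explicit and verify that it does not alter $E_{c,1}(\Psi_r)$. The remaining subtlety is that all the quantities are defined through infima, so each inequality should be stated for an arbitrary admissible parameter and only afterwards passed to the infimum, so that no unjustified attainment of the optimal $r$, $s$, or $\ket{\psi}$ is assumed.
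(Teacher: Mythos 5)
Your proof is correct and rests on the same two ideas as the paper's own argument: composing LOCC maps to get $E_{c,1}(\rho)\le E_{c,1}(\ket{\psi})$ for every $\ket{\psi}\in\mathcal{R}(\rho)$, and observing that $\Psi_r$ is itself an admissible pure-state witness (with $E_{c,1}(\Psi_r)=\log r$ by Schmidt-rank monotonicity) to get the reverse bound. The only difference is organizational: you first establish the unsmoothed identity $\overline{E_{c,1}}=E_{c,1}$ and then note that the infimum over $B_{\epsilon}(\rho)$ preserves it, whereas the paper runs the same two arguments directly on optimal witnesses at the smoothed level; your factorization is cleaner, and your explicit handling of the dimension bookkeeping for embedding $\Psi_r$ into $\mathcal{H}_{AB}$ addresses a detail the paper glosses over.
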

\begin{proof}
	Assume $\ket{\psi_{\epsilon}}$ is the optimal pure state for $\rho$ in terms of $\overline{E_{c,1}^{\epsilon}}$, and let 
	\begin{align}
	\overline{E_{c,1}^{\epsilon}}(\rho)=E_{c,1}(\ket{\psi_{\epsilon}})=r_{\epsilon},
	\end{align}
	next as when $\Lambda$ is a quantum channel, $||\Lambda(\rho)||_1\le ||\rho||_1,$ then let $\Omega(\ket{\psi_{\epsilon}})=\rho_{\epsilon},$ here $\rho_{\epsilon}\in B_{\epsilon}(\rho)$ we have 
	\begin{align}
	&||\Omega^{\otimes n}\circ\Lambda(\Psi_{r_{\epsilon}}^{\otimes n})-\Omega^{\otimes n}(\ket{\psi_{\epsilon}}^{\otimes n})||\nonumber\\\le& ||\Lambda(\Psi_{r_{\epsilon}}^{\otimes n})-(\ket{\psi_{\epsilon}}^{\otimes n})|| \le\epsilon\label{lhs}
	\end{align}
	that is, $\overline{E_{c,1}^{\epsilon}}(\rho)\ge E_{c,1}^{\epsilon}(\rho)$.\par
	Next we prove the other side. Assume $\rho_{\epsilon}$ is the optimal of $\rho$ in terms of $E_{c,1}^{\epsilon}(\rho)$ and $E_{c,1}^{\epsilon}(\rho)=r_{\epsilon}$. Let $\ket{\phi_{\epsilon}}$ be the optimal for $\rho_{\epsilon}$ in terms of $\overline{E_{c,1}}.$ Then we would show that $E_{c,1}(\ket{\phi_{\epsilon}})=r_{\epsilon}$.\par 
	First $E_{c,1}(\ket{\phi_{\epsilon}})> r_{\epsilon}$ is impossible, as by the definition of $E_{c,1},$ $\Psi_{ r_{\epsilon}}$ can be the optimal pure state for $\rho_{\epsilon}$ in terms of $E_{c,1}.$ Next if $E_{c,1}(\ket{\phi_{\epsilon}})< r_{\epsilon},$ then by the similar thought of $(\ref{lhs})$, we see it is impossible, that is, 
	\begin{align}
	E_{c,1}(\ket{\phi_{\epsilon}})=\overline{E_{c}}(\rho_{\epsilon})=r_{\epsilon},
	\end{align}
	next by the definition of $\overline{E_{c}^{\epsilon}},$ we have $\overline{E_{c}^{\epsilon}}(\rho)\le E_{c,1}^{\epsilon}(\rho)$. Then we finish the proof.
\end{proof}\par
From the proof of the above theorem, we donot use the property of LOCC, that is, when the set $\mathcal{T}$ of operations in $\overline{E_{c}^{\epsilon}}$  is in line with the set of operations in $E_{c,1}^{\epsilon},$ $i.e.$ $\mathcal{T}=SEP, PPT$ or $SEPP,$ the above result is also valid.\par

\begin{Theorem}
	Assume $\rho\in\mathcal{D}(\mathcal{H}_{AB})$, then we have 
	\begin{align}
	\lim\limits_{\epsilon\rightarrow0}\lim\limits_{n\rightarrow \infty}\frac{\overline{E_f^{\epsilon}}(\rho^{\otimes n})}{n}=E_c(\rho).
	\end{align}
\end{Theorem}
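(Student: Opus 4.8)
The plan is to prove the identity by sandwiching the iterated limit between $E_c(\rho)$ on both sides, which simultaneously shows the limit exists. Throughout I write $E_f^{\mathrm{cr}}$ for the convex-roof entanglement of formation of (\ref{cre}) with $f=S$, and I shall use two external ingredients: (a) the pointwise bound $\overline{E_f}(\sigma)\ge E_f^{\mathrm{cr}}(\sigma)$ for every $\sigma$, which is precisely the inequality established in the proof of the comparison theorem above following \cite{gour2020optimal}; and (b) the Hayden--Horodecki--Terhal identity that the entanglement cost equals the regularized entanglement of formation, $E_c(\rho)=\lim_{n\to\infty}E_f^{\mathrm{cr}}(\rho^{\otimes n})/n$.

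For the lower bound I would first smooth fact (a): since $\overline{E_f}\ge E_f^{\mathrm{cr}}$ pointwise, taking the infimum over the ball $B_\epsilon(\rho^{\otimes n})$ gives $\overline{E_f^\epsilon}(\rho^{\otimes n})\ge (E_f^{\mathrm{cr}})^\epsilon(\rho^{\otimes n})$. It then remains to show smoothing is asymptotically negligible for the regularized EoF. Here I would invoke the asymptotic continuity (Fannes-type bound) of the entanglement of formation: for any $\bar\sigma\in B_\epsilon(\rho^{\otimes n})$ one has $E_f^{\mathrm{cr}}(\bar\sigma)\ge E_f^{\mathrm{cr}}(\rho^{\otimes n})-2\epsilon n\log d_A-g(\epsilon)$, where $d_A$ is the dimension of $\mathcal{H}_A$ and $g(\epsilon)\to0$. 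Dividing by $n$, letting $n\to\infty$ and then $\epsilon\to0$, the correction $2\epsilon\log d_A+g(\epsilon)/n$ vanishes, so by fact (b) we obtain $\liminf_{\epsilon\to0}\liminf_{n\to\infty}\overline{E_f^\epsilon}(\rho^{\otimes n})/n\ge E_c(\rho)$.

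For the upper bound I would read off the dilution protocol directly from (\ref{ec}). Fix $\epsilon>0$ and any rate $R>E_c(\rho)$. By the definition of entanglement cost there is an $N$ such that for all $n\ge N$ there is an LOCC map $\Lambda_n$ with $\bar\rho_n:=\Lambda_n(\Psi_{2^{nR}})$ satisfying $\tfrac12\tr|\rho^{\otimes n}-\bar\rho_n|\le\epsilon$, i.e. $\bar\rho_n\in B_\epsilon(\rho^{\otimes n})$. The maximally entangled state $\Psi_{2^{nR}}$ is then a pure state lying in $\mathcal{R}(\bar\rho_n)$, with entropy of entanglement $E_f(\Psi_{2^{nR}})=\log 2^{nR}=nR$. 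Hence $\overline{E_f^\epsilon}(\rho^{\otimes n})\le \overline{E_f}(\bar\rho_n)\le nR$, so $\limsup_{n\to\infty}\overline{E_f^\epsilon}(\rho^{\otimes n})/n\le R$ for every fixed $\epsilon$; letting $R\downarrow E_c(\rho)$ and $\epsilon\to0$ gives $\limsup\le E_c(\rho)$. Combining the two bounds yields the claimed equality and, along the way, the existence of the iterated limit.

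The main obstacle I expect is the lower bound: the upper bound is essentially a repackaging of the definition of $E_c$ (and it is exactly there that smoothing is genuinely needed, since the dilution protocol only reproduces $\rho^{\otimes n}$ up to trace distance $\epsilon$), whereas the lower bound must convert the non-convex quantity $\overline{E_f}$ into one whose regularization is known. The delicate points are that $\overline{E_f}$ need not equal the convex roof, so only the one-sided estimate (a) is available, and that the smoothing must be controlled uniformly in $n$ — which forces the stated order of limits ($n\to\infty$ before $\epsilon\to0$) and relies on the $n$-dependent factor $2\epsilon n\log d_A$ in the asymptotic-continuity bound washing out after division by $n$.
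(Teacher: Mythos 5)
Your proposal is correct and follows essentially the same route as the paper: the upper bound comes from feeding the dilution protocol's maximally entangled input into $\mathcal{R}(\bar\rho_n)$ for $\bar\rho_n\in B_\epsilon(\rho^{\otimes n})$, and the lower bound combines the pointwise inequality $\overline{E}\ge E_f^{\mathrm{cr}}$ with Winter-type asymptotic continuity of the entanglement of formation and the Hayden--Horodecki--Terhal regularization identity. The only differences are cosmetic (your explicit rate $R>E_c(\rho)$ versus the paper's implicit one, and a slightly different but equally adequate continuity constant).
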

\begin{proof}
From the definition of entanglement cost in (\ref{ec}), we have that $\forall\epsilon,$ when $n\rightarrow\infty,$ there exists a $\Lambda\in LOCC$ such that $\Lambda({\Psi_r^{\epsilon}}^{\otimes n})=\rho_{n,\epsilon},$ $\rho_{n,\epsilon}\in B_{\epsilon}(\rho^{\otimes n})$, that is
\begin{align}
E_c({\Psi_r^{\epsilon}}^{\otimes n})\ge \overline{E_c}(\rho_{n,\epsilon})\ge \overline{E_c^{\epsilon}}(\rho^{\otimes n}),
\end{align}
the second inequality is due to the definition of the smoothed entanglement cost. Next denote $\sigma_{\epsilon}$ as the state such that $\overline{E_c^{\epsilon}}(\rho^{\otimes n})=\overline{E_c}(\sigma_{\epsilon})$,  then as when $\ket{\psi}_{AB}$ is a pure state, $E_c(\ket{\psi}_{AB})=E_f(\ket{\psi}_{AB})=-\tr\rho_B\log\rho_B$, and the Theorem 1 in \cite{gour2020optimal}, we have 
\begin{align}
\overline{E_c}(\sigma_{\epsilon})\ge E_f(\sigma_{\epsilon}),
\end{align}
In \cite{winter2016tight}, the author showed that when $\rho,\sigma\in\mathcal{D}(\mathcal{H}_{AB}),$ $\frac{1}{2}||\rho-\sigma||\le \epsilon,$
\begin{align*}
|E_f(\rho)-E_f(\sigma)|\le \delta\log d+(1+\delta)h(\frac{\delta}{\delta+1}),\label{aw}
\end{align*}
here $d$ is the dimension of the smaller of the two system. Without loss of generality, we assume $\dim\mathcal{H}_A=\dim\mathcal{H}_B=d,$ $\delta=\sqrt{\epsilon(2-\epsilon)},$ $h(\epsilon)=-\epsilon\log\epsilon-(1-\epsilon)\log(1-\epsilon).$
\begin{align*}
\overline{E^{\epsilon}_{c}}(\rho^{\otimes n})\ge& E_f(\sigma_{\epsilon})\nonumber\\
\ge& E_f(\rho^{\otimes n}) -n\delta\log d+(1+\delta)h(\frac{\delta}{1+\delta}),
\end{align*}
that is,
\begin{align}
\frac{\overline{E^{\epsilon}_{c}}(\rho^{\otimes n})}{n}\ge \frac{E_f(\rho^{\otimes n})}{n}-\delta\log d +\frac{(1+\delta)h(\frac{\delta}{1+\delta})}{n}.
\end{align}
As $\lim\limits_{\epsilon\rightarrow 0} h(\epsilon)=0$ and $\lim\limits_{\epsilon\rightarrow0} \delta=0,$
then we have 
\begin{align}
E_c(\rho)=&\lim\limits_{\epsilon\rightarrow0}\lim\limits_{n\rightarrow\infty}\frac{{E_c(\Psi_r^{\epsilon}}^{\otimes n})}{n}\nonumber\\
\ge&\lim\limits_{\epsilon\rightarrow0}\lim\limits_{n\rightarrow\infty}\frac{\overline{E^{\epsilon}_{c}}(\rho^{\otimes n})}{n}\nonumber\\
\ge& \lim\limits_{n\rightarrow\infty}\frac{E_f(\rho^{\otimes n})}{n}\nonumber\\
=&E_c(\rho).
\end{align}
The last equality is due to the result in \cite{hayden2001asymptotic},
then we finish the proof.
\end{proof}
\section{Applications}
\par In the following, we presented that a difference between the quantity defined in (\ref{Ev}) between LOCC and SEPP. Then we present that the entanglement generated from the geometric entanglement measure for pure states by the convex roof extended method and the method proposed here are equal. At last, we present that for the $2\otimes 2$ system, the entanglement generated from concurrence for pure states by the convex roof extended method and the method proposed here are equal, and we present that the entanglement measure generated by concurrence for pure states by the method proposed here is monogamous under the definition of monnogamy proposed in \cite{gour2018monogamy}.
\subsection{An Example on an entanglement measure under SEPP}
\begin{definition}\cite{terhal2000schmidt}
	Assume $\ket{\psi}_{AB}$ is a pure state,  its Schimidt number 
	\begin{align*}
	Sch(\ket{\psi_{AB}})=Rank(\rho_A),
	\end{align*} here $\rho_A=\tr_B\ket{\psi}_{AB}\bra{\psi}$. \par
	When $\rho_{AB}$ is a mixed state, then its Schimidt number $Sch(\rho)$ is $k$, if (i) there exists a decomposition of $\{p_i,\ket{\psi_i}\}$ such that the Schimidt number of all the pure states $\ket{\psi_i}$ are at most $k,$ (ii) for any decomposition $\{p_i,\ket{\psi_i}\}$ of $\rho_{AB},$ there exists at least one pure state $\ket{\psi_j}$ in the set $\{\ket{\psi_i}\}$ with its Schmidt number at least $k$. 
\end{definition}
\par  In \cite{gour2020optimal}, the authors showed that when the entanglement measure $E$ is the Schmidt number, $\mathcal{T}=LOCC,$ $\overline{E}(\rho_{AB})=E(\rho_{AB}).$ In \cite{chitambar2020entanglement}, the authors presented the following interesting result.
\begin{Lemma}\label{net}\cite{chitambar2020entanglement}
	For every biparitite state $\rho$ and any positive interger $k$, there exists a SEPP operation $\Lambda$ such that $\Lambda(\ket{\psi_k})=\rho_{AB}$ if and only if $R(\rho)\le R(\ket{\psi})$, here $\ket{\psi_k}=\frac{1}{\sqrt{k}}\sum_i\ket{ii}$ is a maximally entangled state, and $R(\rho)$ is its robustness of entanglement which is defined as follows
	\begin{align*}
	R(\rho)=\min_{\sigma\in \mathcal{S}(\mathcal{H}_{AB})}min_s\{s|\rho+s\sigma/(1+s)\in \mathcal{S}(\mathcal{H}_{AB})\}.
	\end{align*}
\end{Lemma}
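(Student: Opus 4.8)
The plan is to prove both implications by exploiting two standard facts about the global robustness $R$: that it is monotone under SEPP operations, and that its value on the maximally entangled state is $R(\proj{\psi_k})=k-1$. Throughout I write $P=\proj{\psi_k}$ and use the elementary overlap bound $\tr[P\xi]\le 1/k$ for every $\xi\in\mathcal{S}(\mathcal{H}_{AB})$, which follows from Cauchy--Schwarz applied to product vectors together with the convexity of $\mathcal{S}(\mathcal{H}_{AB})$. As a first step I would record, as a one-line lemma, the SEPP-monotonicity of $R$, since it drives the necessity direction.

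For the necessity direction, suppose $\Lambda$ is SEPP and $\Lambda(P)=\rho$. Let $s=R(P)=k-1$ and choose $\sigma\in\mathcal{S}(\mathcal{H}_{AB})$ with $\tau:=(P+s\sigma)/(1+s)$ separable. Applying $\Lambda$ and using linearity gives
\[
\Lambda(\tau)=\frac{\rho+s\,\Lambda(\sigma)}{1+s}.
\]
Since $\tau$ and $\sigma$ are separable and $\Lambda$ is SEPP, both $\Lambda(\tau)$ and $\Lambda(\sigma)$ are separable, so $\Lambda(\sigma)$ witnesses $R(\rho)\le s=k-1=R(\ket{\psi})$. This settles the ``only if'' part.

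For the sufficiency direction, assume $R(\rho)\le k-1$ and construct the measure-and-prepare channel
\[
\Lambda(X)=\tr[PX]\,\rho+\tr[(I-P)X]\,\delta,
\]
where $\delta\in\mathcal{S}(\mathcal{H}_{AB})$ is still to be chosen. This map is completely positive and trace-preserving by construction, and $\Lambda(P)=\rho$ because $\tr[P\,P]=1$. It remains to pick $\delta$ so that $\Lambda$ is SEPP. For separable $\xi$ set $p=\tr[P\xi]\in[0,1/k]$, so that $\Lambda(\xi)=p\rho+(1-p)\delta$; by convexity of $\mathcal{S}(\mathcal{H}_{AB})$ and separability of $\delta$ (the $p=0$ endpoint), it suffices that the extreme point $p=1/k$ be separable, i.e. that $\tfrac{1}{k}\rho+\tfrac{k-1}{k}\delta\in\mathcal{S}(\mathcal{H}_{AB})$.

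The main obstacle is producing such a $\delta$, and this is exactly where the hypothesis enters. By definition of $R$ there are $s_0\le k-1$ and separable $\delta_0$ with $\tau_0:=(\rho+s_0\delta_0)/(1+s_0)$ separable. I would then inflate the mixing weight from $s_0$ to $k-1$ by setting $\delta=\tfrac{s_0}{k-1}\delta_0+\tfrac{k-1-s_0}{k-1}\tau_0$, a convex combination of separable states and hence separable, for which a direct computation gives $(\rho+(k-1)\delta)/k=\tau_0\in\mathcal{S}(\mathcal{H}_{AB})$. With this $\delta$ every separable input is mapped onto the separable segment between $\delta$ and $\tau_0$, so $\Lambda$ is SEPP and $\Lambda(\ket{\psi_k})=\rho$, completing the argument. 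I expect the only genuinely delicate points to be the overlap bound $\tr[P\xi]\le 1/k$ and the value $R(\proj{\psi_k})=k-1$; both are classical, but I would verify the latter so that the two implications meet exactly at the threshold $k-1$.
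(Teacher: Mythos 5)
Your proposal is correct, but there is nothing in the paper to compare it against: the paper states this as Lemma~\ref{net} and imports it wholesale from \cite{chitambar2020entanglement} without giving any proof. What you have written is therefore a self-contained reconstruction of the cited result, and it is sound. The necessity direction is the standard observation that the Vidal--Tarrach robustness is a SEPP monotone: pushing the optimal mixture $(P+s\sigma)/(1+s)$ through $\Lambda$ turns $\Lambda(\sigma)$ into a witness for $R(\rho)\le s=k-1$. The sufficiency direction via the measure-and-prepare channel $\Lambda(X)=\tr[PX]\,\rho+\tr[(I-P)X]\,\delta$ is also the right construction, and your choice $\delta=\tfrac{s_0}{k-1}\delta_0+\tfrac{k-1-s_0}{k-1}\tau_0$ does exactly what is needed: the computation $\rho+(k-1)\delta=(1+s_0)\tau_0+(k-1-s_0)\tau_0=k\tau_0$ checks out, so every separable input lands on the segment between the separable states $\delta$ and $\tau_0$, using the overlap bound $\tr[P\xi]\le 1/k$ (which is the Cauchy--Schwarz bound on $|\braket{\psi_k}{a,b}|^2$ extended by convexity). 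Two minor points you should make explicit: first, the case $k=1$ must be handled separately since your $\delta$ has $k-1$ in the denominator (there the claim is trivial, as $\ket{\psi_1}$ is a product state and the constant map $X\mapsto\tr[X]\rho$ works whenever $\rho$ is separable); second, you should state that the minimum in the definition of $R$ is attained, which follows from compactness of $\mathcal{S}(\mathcal{H}_{AB})$, so that $s_0$ and $\delta_0$ actually exist. With those additions the argument is complete and matches the proof strategy of the reference the paper cites.
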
\par
Then we present that when $\mathcal{T}\in SEPP,$ the entanglement measure built from the method here can be increased.
\begin{example}
	Assume that $\ket{\psi}=\frac{1}{\sqrt{3}}(\ket{00}+\ket{11}+\ket{22}),$ let
	\begin{align*}
	\rho=&\frac{1}{4}\ket{\phi_1}\bra{\phi_1}+\frac{3}{4}\ket{\phi_2}\bra{\phi_2},\\ \ket{\phi_1}=&\frac{1}{2}\ket{00}+\frac{1}{6}\ket{11}+\frac{1}{6}\ket{22}+\frac{5}{6}\ket{33},\\ \ket{\phi_2}=&\frac{1}{2}\ket{00}+\frac{1}{8}\ket{11}+\frac{1}{8}\ket{22}+\frac{\sqrt{46}}{8}\ket{33},
	\end{align*} in \cite{vidal1999robustness}, the authors showed that when $\ket{\phi}=\sum_i\sqrt{\lambda_i}\ket{ii}$, $R(\ket{\phi})=(\sum_i\sqrt{\lambda_i})^2-1$, then
	\begin{align*}
	R(\ket{\psi})=&2,\\
	R(\ket{\phi_1})=&1.7778,\hspace{2mm} R(\ket{\phi_2})=1.5529.
	\end{align*}\par
	Next as $R$ is convex, we have $R(\rho)\le 1.7778<2,$ due to the Lemma \ref{net}, we have that there exists a SEPP $\Lambda$ such that $\Lambda(\ket{\psi})=\rho.$ However, it is clear to see that $Sch(\ket{\psi})=3,$ $Sch(\rho)=4,$ that is, when $\mathcal{T}$ stands for the LOCC in (\ref{Ev}), $\overline{E}(\rho)=4,$ when $\mathcal{T}$ is SEPP in (\ref{Ev}),  $\overline{E}(\rho)\le 3.$
\end{example}
\subsection{The extension of geometric entanglement measure}
Here we first discuss the connection between the extension of geometric entanglement measure by the method of (\ref{Ev}) and the original definition defined in $\r$ \cite{wei2003geometric}. The latter is defined as the maximum overlap between $\r$ and any fully product states $\ket{a_1,...,a_n}$. That is,
\begin{eqnarray}
G(\r):=1-\max_{\ket{\ps}=\ket{a_1,...,a_n}}	
\bra{\ps}\r\ket{\ps}.\label{gem}
\end{eqnarray}
The GME is a fundamental multipartite entanglement measure in the past decades \cite{chen2010computation, chen2011Connections, zhu2010additivity}. The GME can quantify the entanglement of experimentally realizable states, GHZ states, W states and graph states for one-way quantum computing \cite{briegel2001persistent}, topological quantum computing \cite{lu2009demonstrating}, and six-photon Dicke states \cite{wieczorek2009experimental}, respectively. \\
\indent Assume $\ket{\psi}=\sum_i\sqrt{\lambda_i}\ket{ii}\in \mathcal{H}_{AB},$ here we can assume $\lambda_j\ge \lambda_{j+1},$ then from the definition of $(\ref{gem})$, we have
\begin{align}
G(\ket{\psi})=&1-\max_{\phi}|\bra{\phi}\psi\rangle|^2\nonumber\\
                   =&1-\lambda_0.
\end{align}\par
The extension of geometric entanglement measure is defined as
\begin{align}
\overline{G}(\rho)=\inf_{\psi\in R(\rho)}G(\ket{\psi}), \label{gmep}
\end{align}
here $R(\rho)$ is the set of pure states that can be tansformed into $\rho$ through LOCC.  Next we prove that 
\begin{Theorem}
	Assume $\rho\in \mathcal{D}(\mathcal{H}_{AB}),$ then
		\begin{align}
		\overline{G}(\rho)=G_f(\rho),\label{cgme},
		\end{align}
		here $G_f$ for mixed states is built by the convex roof extended method defined in (\ref{cre}).
	\end{Theorem}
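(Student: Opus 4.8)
The plan is to combine the variational reduction (\ref{evt}) established earlier with the elementary observation that, on pure states, the geometric measure depends only on the largest Schmidt coefficient. Writing $\ket{\psi}=\sum_i\sqrt{\lambda_i}\ket{ii}$ with $\mu^{\downarrow}(\ket{\psi})=(\lambda_0,\lambda_1,\dots,\lambda_{d-1})$ decreasing, we have $G(\ket{\psi})=1-\lambda_0=\sum_{i=1}^{d-1}\lambda_i=E_2(\ket{\psi})$; in particular $G$ coincides with one of Vidal's monotones $E_k$ and is therefore an entanglement monotone for pure states. This last fact is exactly the hypothesis needed to apply the earlier Theorem, so I would begin by invoking it with $E=G$ to obtain
\begin{align}
\overline{G}(\rho)=\inf_{\ket{\psi}\in\mathcal{O}(\rho)}G(\ket{\psi}),\nonumber
\end{align}
where each $\ket{\psi}\in\mathcal{O}(\rho)$ satisfies $\mu^{\downarrow}(\ket{\psi})=\sum_i p_i\mu^{\downarrow}(\ket{\phi_i})$ for some decomposition $\{p_i,\ket{\phi_i}\}$ of $\rho$.

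The core step is then to evaluate $G$ on such a $\ket{\psi}$. Since every $\mu^{\downarrow}(\ket{\phi_i})$ is arranged in decreasing order, the convex combination $\sum_i p_i\mu^{\downarrow}(\ket{\phi_i})$ is again a decreasing probability vector, so its leading entry is $\sum_i p_i[\mu^{\downarrow}(\ket{\phi_i})]_0$. Denoting by $\lambda_0(\ket{\phi_i})$ the largest Schmidt coefficient of $\ket{\phi_i}$, this yields
\begin{align}
G(\ket{\psi})=1-\sum_i p_i\lambda_0(\ket{\phi_i})=\sum_i p_i\bigl(1-\lambda_0(\ket{\phi_i})\bigr)=\sum_i p_iG(\ket{\phi_i}).\nonumber
\end{align}
This is the heart of the argument: because $G$ is affine in the leading Schmidt coefficient, averaging the Schmidt spectra and averaging the values of $G$ produce the same number, with no loss.

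Finally I would let $\ket{\psi}$ range over $\mathcal{O}(\rho)$; the associated decompositions $\{p_i,\ket{\phi_i}\}$ then range over all decompositions of $\rho$, whence
\begin{align}
\overline{G}(\rho)=\inf_{\{p_i,\ket{\phi_i}\}}\sum_i p_iG(\ket{\phi_i})=G_f(\rho),\nonumber
\end{align}
the last equality being the definition (\ref{cre}) of the convex roof extension. The step I expect to be the real obstacle is the middle computation: one must argue cleanly that the largest entry of a positive combination of decreasingly--sorted vectors equals the combination of their largest entries, since this ``no-resorting'' phenomenon is precisely where the special structure of the geometric measure enters. For a generic entanglement monotone the same construction would only give $G(\ket{\psi})\le\sum_i p_iG(\ket{\phi_i})$, reproducing the general inequality $\overline{E}\ge E_f$ of \cite{gour2020optimal} rather than equality; the equality here is a feature of $G$ being a function of the top Schmidt coefficient alone, on which it acts affinely.
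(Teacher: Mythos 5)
Your proposal is correct and follows essentially the same route as the paper: both hinge on the pure state whose Schmidt vector is the decomposition-average $\sum_i p_i\mu^{\downarrow}(\ket{\phi_i})$, its LOCC-convertibility to $\rho$ via the majorization result of Jonathan--Plenio, and the fact that $G=1-\lambda_0$ is affine in the top Schmidt coefficient. Your version is in fact slightly more careful, since you make explicit the ``no-resorting'' computation $G(\ket{\psi})=\sum_i p_iG(\ket{\phi_i})$ that the paper's proof of the inequality $\overline{G}(\rho)\le G_f(\rho)$ uses only implicitly.
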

\begin{proof}
	As when $\ket{\psi}\in \mathcal{H}_{AB},$ $G_f(\ket{\psi})=\overline{G}(\ket{\psi}),$ by the Theorem 1 in \cite{gour2020optimal}, we have 
	\begin{align}
	\overline{G}(\rho)\ge G_f(\rho). \label{lgme}
	\end{align}
	 Next assume $\{p_i,\ket{\phi_i}\}$ is the optimal decomposition of $\rho$ in terms of $G_f$, assume $\ket{\varphi}$ is the pure state with $\mu^{\downarrow}(\varphi)=\sum_ip_i\mu^{\downarrow}(\phi_i)$, by the main result in \cite{jonathan1999minimal},  $\varphi\stackrel{LOCC}{\longrightarrow}\rho,$ by the definition of $\overline{G},$ we have
	\begin{align}
    \overline{G}(\rho)\le G_f(\rho). \label{rgme}
	\end{align}
	Combing with (\ref{lgme}) and (\ref{rgme}), we finish the proof.
\end{proof}
\subsection{Some results on $2\otimes 2$ states}
\indent Here we first recall the definition of concurrence for bipartite quantum states. Assume $\ket{\psi}_{AB}\in \mathcal{H}_{AB},$ 
its concurrence \cite{bennett1996mixed} is defined as
\begin{align}
C(\ket{\psi}_{AB})=\sqrt{2(1-\tr\rho_A^2)},
\end{align}
here $\rho_A=\tr_B\ket{\psi}_{AB}\bra{\psi}.$ When $\rho_{AB}$ is a bipartite mixed state, its concurrence is defined as
\begin{align}
C(\rho_{AB})=\min_{\{p_i,\ket{\phi_i}\}}\sum_ip_iC(\ket{\phi_i}),
\end{align} 
where the minimum takes over all the decompositions $\{p_i,\ket{\phi_i}\}$ of $\rho_{AB}$ such that $\rho_{AB}=\sum_ip_i\ket{\phi_i}\bra{\phi_i}.$\par
 Moreover, when $\ket{\psi}_{AB}$ is bipartite qubit pure state,
By the Schimidt decomposition, $\ket{\psi}_{AB}$ can be written as
\begin{align}
\ket{\psi}_{AB}=\sqrt{\lambda_0}\ket{00}+\sqrt{\lambda_1}\ket{11},
\end{align} here we assume $1\ge \lambda_0\ge \lambda_1\ge 0,$ $\lambda_0+\lambda_1=1.$ And by computation, its concurrence is $C^2(\ket{\psi}_{AB})=4\lambda_0(1-\lambda_0),$ that is, 
\begin{align}
\lambda_0=\frac{1+\sqrt{1-C^2}}{2},\hspace{3mm}\lambda_1=\frac{1-\sqrt{1-C^2}}{2}.\label{lambda}
\end{align}
\par
Next we present the relation between $C$ and $\overline{C}$ of a bipartite state $\rho_{AB}.$
\begin{Theorem}\label{c}
	Assume $\rho_{AB}\in\mathcal{D}(\mathcal{H}_2\otimes\mathcal{H}_2),$ then we have 
	\begin{align}
	\overline{C}(\rho_{AB})=C(\rho_{AB}).
	\end{align}
\end{Theorem}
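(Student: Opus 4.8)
The plan is to prove the two inequalities $\overline{C}(\rho_{AB}) \ge C(\rho_{AB})$ and $\overline{C}(\rho_{AB}) \le C(\rho_{AB})$ separately, where $C(\rho_{AB})$ denotes the convex roof extension of concurrence. The first inequality is immediate from the general comparison already used above: since $C(\ket{\psi}_{AB}) = \overline{C}(\ket{\psi}_{AB})$ on pure states, Theorem 1 of \cite{gour2020optimal} gives $\overline{C}(\rho_{AB}) \ge C(\rho_{AB})$, exactly as in the proof of (\ref{cgme}). So the entire content lies in the reverse inequality.

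For the reverse inequality I would exploit the special feature of concurrence in $2\otimes 2$ systems. First I would record the explicit form $C(\ket{\phi}) = 2\sqrt{\lambda_0\lambda_1}$ for a two-qubit pure state with ordered Schmidt vector $\mu^{\downarrow}(\ket{\phi}) = (\lambda_0,\lambda_1)$; by (\ref{lambda}) this is a strictly monotone function of $\lambda_0$ on $[\tfrac12,1]$, so the value of the concurrence of a two-qubit pure state determines its Schmidt coefficients uniquely. Then I would invoke Wootters' result that the convex roof defining $C(\rho_{AB})$ is attained by an optimal decomposition $\{p_i,\ket{\phi_i}\}$ in which every pure state carries the same concurrence, $C(\ket{\phi_i}) = C(\rho_{AB})$ for all $i$. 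By the previous sentence all the $\ket{\phi_i}$ then share one and the same ordered Schmidt vector $(\lambda_0,\lambda_1)$.

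With this decomposition in hand, I would form the pure state $\ket{\psi}$ whose Schmidt vector is the average $\mu^{\downarrow}(\ket{\psi}) = \sum_i p_i\,\mu^{\downarrow}(\ket{\phi_i}) = (\lambda_0,\lambda_1)$, which is well defined precisely because the averaged vector coincides with the common one. By the majorization criterion of \cite{jonathan1999minimal} (equivalently, since $\ket{\psi}\in\mathcal{O}(\rho_{AB})$ in the sense of (\ref{evt})), one has $\ket{\psi}\stackrel{LOCC}{\longrightarrow}\rho_{AB}$, so $\ket{\psi}\in\mathcal{R}(\rho_{AB})$ and therefore $\overline{C}(\rho_{AB}) \le C(\ket{\psi}) = 2\sqrt{\lambda_0\lambda_1} = C(\rho_{AB})$. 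Combining the two inequalities finishes the proof.

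I expect the main obstacle to be precisely this reverse inequality, and more specifically the necessity of using the equal-concurrence optimal decomposition rather than an arbitrary one. If one simply averaged the Schmidt vectors of a generic optimal decomposition---as in the proof of the geometric-measure identity (\ref{cgme}), where $G = 1-\lambda_0$ is linear in the Schmidt spectrum---then the strict concavity of $\lambda_0 \mapsto 2\sqrt{\lambda_0(1-\lambda_0)}$ would yield $C(\ket{\psi}) \ge \sum_i p_i C(\ket{\phi_i})$, i.e. the wrong direction. The role of Wootters' theorem is to collapse the average onto a single Schmidt vector, turning the concavity estimate into an equality; isolating and justifying this step is the crux of the argument.
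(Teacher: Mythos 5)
Your proof is correct and follows essentially the same route as the paper's: the crux in both is Wootters' equal-concurrence optimal decomposition, which forces every $\ket{\phi_i}$ to share one Schmidt vector so that the averaged vector coincides with it, and the corresponding pure state then reaches $\rho_{AB}$ by LOCC (Jonathan--Plenio) with concurrence exactly $C(\rho_{AB})$. The only cosmetic difference is that you get the lower bound $\overline{C}\ge C$ directly from Theorem 1 of \cite{gour2020optimal}, whereas the paper's proof of this particular theorem rederives it by a Cauchy--Schwarz/majorization argument showing the Wootters-derived state is LOCC-reachable from every element of $\mathcal{R}(\rho_{AB})$; your shortcut is the same one the paper uses in its other theorems, and your closing remark correctly identifies why a generic optimal decomposition would fail (concavity of $\lambda_0\mapsto 2\sqrt{\lambda_0(1-\lambda_0)}$ gives the wrong direction).
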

\begin{proof}
	Assume $\{p_i,\ket{\phi_i}\}$ is the optimal decomposition of $\rho_{AB}$ in terms of $C$, that is, for any decomposition $\{q_k,\ket{\varphi_k}\}$ of $\rho_{AB},$
	\begin{align*}
	\sum_ip_iC(\ket{\phi_i})\le \sum_kq_kC(\ket{\varphi_k}).
	\end{align*}
	Let $\ket{\chi}$ be a pure state with $\mu^{\downarrow}(\ket{\chi})=\sum_i p_i\mu^{\downarrow}(\phi_i)$, then by the theorem 1 in \cite{jonathan1999minimal},  we have $\ket{\chi}$ can be transformed into $\ket{\phi_i}$ with probability $p_i,$ that is, $\ket{\chi}\stackrel{LOCC}{\longrightarrow}\rho_{AB}.$\par
Next in \cite{wootters1998entanglement}, the authors showed that for a two-qubit state $\rho_{AB}$, there exists a decomposition $\{r_l,\ket{\omega_l}\}$ of $\rho_{AB}$ such that
\begin{align}
C(\rho_{AB})=&\min_{\{r_l,\ket{\omega_l}\}} \sum_lr_lC(\ket{\omega_l}),\nonumber\\
C(\ket{\omega_l})=&C(\rho_{AB}), \hspace{3mm}\forall l.
\end{align}
Then we have
\begin{align}
\sum_lr_l\sqrt{1-C^2(\ket{\omega_l})}=&\sqrt{1-C^2(\rho_{AB})}\nonumber\\
\ge&\sqrt{1-(\sum_kq_kC(\ket{\varphi_k})^2}\nonumber\\
\ge&\sum_k q_k\sqrt{1-C^2(\ket{\varphi_k})},
\end{align}
here we denote that $\{q_k,\ket{\varphi_k}\}$ is an arbitrary decomposition of $\rho_{AB}$. The first inequality is due to the definition of concurrence, the second inequality is due to the Cauchy-Schwarz inequality. The by the equality (\ref{lambda}), we have that 
\begin{align}
\sum_lr_l\mu^{\downarrow}(\ket{\omega_l})\succ\sum_k q_k\mu^{\downarrow}(\ket{\varphi_k}).
\end{align}
Next we prove that the state $\ket{\upsilon}$ with $\mu^{\downarrow}(\ket{\upsilon})=\sum_lr_l\mu^{\downarrow}(\ket{\omega_l})$ is the optimal for $\rho_{AB}$ in terms of $\overline{C}.$ First if $\ket{\xi}\in \mathcal{R}(\rho_{AB}),$ then there exists a decomposition $\{m_t,\ket{\phi_t}\}$
\begin{align}
\mu^{\downarrow}(\ket{\xi})\prec \sum_t m_t\mu^{\downarrow}(\ket{\phi_t})\prec\sum_l r_l\mu^{\downarrow}(\omega_l),
\end{align} 
on the other hand, as $C(\ket{\omega_l})=C(\rho)$, $\forall l,$ and $\lambda_0=\frac{1+\sqrt{1-C^2}}{2},$ $\lambda_1=\frac{1-\sqrt{1-C^2}}{2}$ then we have that 
\begin{align}
\mu^{\downarrow}(\ket{\upsilon})=\mu^{\downarrow}(\ket{\omega_l})\succ\mu^{\downarrow}(\ket{\xi}),
\end{align}
that is, $\ket{\xi}\stackrel{LOCC}{\longrightarrow}\ket{\upsilon}.$ Due to the definition of $\overline{C},$ we have that $\ket{\upsilon}$ is the optimal pure state for $\rho_{AB}$ in terms of $E$. Last, by the above analysis and the definition of $\overline{C}(\rho_{AB}),$ we have that 
\begin{align}
C(\rho_{AB})=\overline{C}(\rho_{AB}).
\end{align}
Then we finish the proof.
\end{proof}
\par Here we remark that from the proof of the above theorem, other entanglement measures in terms of the method proposed here for the states in $2\otimes 2$ systems can be obtained, such as Tsallis-$q$ entanglement measure \cite{rossignoli2002generalized} and R$\acute{e}$nyi-$\alpha$ entanglement measure \cite{horodecki2001separability} when $q$ \cite{luo2016general} and $\alpha$ \cite{song2016general} are in some regions. \par
   Monogamy of entanglement (MoE) is a fundamental property that can distinguish entanglement from classical correlations. Mathematically, MoE means that it can be characterized as in terms of an entanglement measure $\mathcal{E}$ for a tripartite system $A,B$ and $C$,
   \begin{align*}
   \mathcal{E}_{A|BC}\ge \mathcal{E}_{AB}+\mathcal{E}_{AC},
   \end{align*} 
   here $\mathcal{E}_{AB}$ denotes the entanglement $AB$ in terms of $\mathcal{E}.$ Although many entanglement measures satisfy the above inequality for multi-qubit systems \cite{coffman2000distributed,christandl2004squashed,san2010tsallis,bai2014general,luo2016general}, the above inequality is not valid in general in terms of almost all entanglement measures for multipartite higher dimensional systems \cite{ou2007violation,bai2014general},  it seems only one known entanglement measure, the squashed entanglement, is monogamous for arbitrary dimensional systems \cite{christandl2004squashed}. \\
   \indent Recently, a generalized monogamy relation for an entanglement measure $\mathcal{E}$ was proposed in \cite{gour2018monogamy}.  There the authors defined that an entanglement measure $\mathcal{E}$ is monogamous for a tripartite system $A,B$ and $C$ if for any $\rho_{ABC}\in \mathcal{H}_{A}\otimes\mathcal{H}_B\otimes\mathcal{H}_C,$
   \begin{align}
   \mathcal{E}_{A|BC}=\mathcal{E}_{AB}\Longrightarrow \mathcal{E}_{AC}=0.
   \end{align}\par
   Moreover, the authors showed a class of entanglement measures satisfies the above relation for tripartite systems \cite{guo2019monogamy}.  Next we present a corollary due to the Theorem \ref{c}.
   \begin{Corollary}
   	Let $\rho_{ABC}\in \mathcal{D}(\mathcal{H}_2\otimes\mathcal{H}_2\otimes\mathcal{H}_d),$ then if $\overline{C}(\rho_{A|BC})=\overline{C}(\rho_{AB}),$ then $\overline{C}(\rho_{AC})=0.$
   \end{Corollary}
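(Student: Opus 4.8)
The plan is to reduce the disentangling condition to a Coffman--Kundu--Wootters (CKW) type monogamy inequality for the convex-roof concurrence, using Theorem~\ref{c} to pin down the genuine two-qubit cut. Write $C$ for the convex-roof concurrence (the measure $E_f$ of (\ref{cre}) built from the pure-state concurrence), and recall from \cite{gour2020optimal} the general bound $\overline{E}\ge E_f$, which for the $2\otimes 2d$ cut $A|BC$ reads $\overline{C}(\rho_{A|BC})\ge C(\rho_{A|BC})$. On the genuine two-qubit cut $AB$, Theorem~\ref{c} supplies the equality $\overline{C}(\rho_{AB})=C(\rho_{AB})$, and by Theorem~\ref{th1} the map $\overline{C}$ is an entanglement measure, hence vanishes on separable states.

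The one analytic input not contained in the excerpt is the monogamy inequality
\[
C(\rho_{A|BC})^{2}\ \ge\ C(\rho_{AB})^{2}+C(\rho_{AC})^{2}
\]
for the convex-roof concurrence on a $2\otimes2\otimes d$ system with $A$ a qubit \cite{coffman2000distributed,guo2019monogamy}. Granting it, the corollary is immediate: starting from the hypothesis $\overline{C}(\rho_{A|BC})=\overline{C}(\rho_{AB})$ and using in turn the bound $\overline{C}\ge C$, the monogamy inequality, and Theorem~\ref{c},
\begin{align*}
\overline{C}(\rho_{AB})^{2}
&=\overline{C}(\rho_{A|BC})^{2}\ \ge\ C(\rho_{A|BC})^{2}\\
&\ge\ C(\rho_{AB})^{2}+C(\rho_{AC})^{2}
=\overline{C}(\rho_{AB})^{2}+C(\rho_{AC})^{2}.
\end{align*}
Hence $C(\rho_{AC})=0$. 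A two-qubit-by-qudit state with vanishing convex-roof concurrence admits a decomposition into product vectors, hence is separable \cite{wootters1998entanglement}, and therefore $\overline{C}(\rho_{AC})=0$, which is the claim.

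The main obstacle is the qubit--qubit--qudit monogamy inequality itself: Theorem~\ref{c} only covers $2\otimes2$, while the cuts $A|BC$ and $AC$ are higher dimensional, and neither the three-qubit CKW relation nor its Osborne--Verstraete $n$-qubit extension literally applies once $C$ is a qudit, so one must invoke the sharper statement that monogamy of concurrence persists whenever the focus party $A$ is a qubit. As a self-contained alternative I would exploit that, $A$ being a qubit, every pure vector in a decomposition of an $A$-versus-rest reduced state has Schmidt rank at most two, so $\sqrt{1-C(\eta)^{2}}$ equals the Bloch-vector length $\abs{\vec a}$ of its $A$-marginal $\tr_{X}\eta$. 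Running the Jonathan--Nielsen majorization argument exactly as in the proofs of Theorems~\ref{th1} and \ref{c} then yields the closed form
\[
\overline{C}(\rho_{AX})=\sqrt{1-M_X^{2}},\qquad
M_X=\sup_{\{r_j,\eta_j\}}\sum_j r_j\,\abs{\vec a_j},
\]
the supremum running over pure decompositions of $\rho_{AX}$ with $\vec a_j$ the Bloch vector of $\tr_X\eta_j$. Tracing out one party and refining a decomposition can only increase the average Bloch length (convexity of $\abs{\vec a}$), so $M_B,M_C\ge M_{BC}$; the hypothesis is $M_{BC}=M_B$ and the goal $\overline{C}(\rho_{AC})=0$ is precisely $M_C=1$. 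The delicate step on this second route is to convert the equality $M_{BC}=M_B$ into the rigidity that the optimizing $A$-marginals are pure, so that $\rho_{AC}$ splits into product vectors; I expect the monogamy-inequality route to be the shorter one, and the real work there is locating and justifying the qubit--qubit--qudit inequality.
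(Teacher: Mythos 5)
Your argument is correct and follows the same skeleton as the paper's proof: sandwich $\overline{C}$ between convex-roof quantities via $\overline{C}(\rho_{A|BC})\ge C(\rho_{A|BC})$ and Theorem~\ref{c} on the two-qubit cut, invoke a monogamy property of the convex-roof concurrence to force $C(\rho_{AC})=0$, and then lift separability of $\rho_{AC}$ back to $\overline{C}(\rho_{AC})=0$ exactly as you do. The one place you diverge is the monogamy input. You reach for the quantitative CKW inequality $C(\rho_{A|BC})^2\ge C(\rho_{AB})^2+C(\rho_{AC})^2$ on $2\otimes2\otimes d$, and you correctly identify its validity there as the delicate point (it does hold --- the squared concurrence is monogamous in $2^{\otimes(n-1)}\otimes d$ systems, but that is due to Kim--Sanders-type extensions rather than \cite{coffman2000distributed}, which covers only three qubits). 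The paper sidesteps this entirely by using the weaker, qualitative disentangling statement of \cite{guo2019monogamy} --- that $C(\rho_{A|BC})=C(\rho_{AB})$ already implies $C(\rho_{AC})=0$ --- combined with the elementary monotonicity $C(\rho_{A|BC})\ge C(\rho_{AB})$ to convert the hypothesis on $\overline{C}$ into the hypothesis on $C$; your CKW route buys both of these steps in one inequality but at the cost of a strictly stronger external lemma. Either citation closes the argument, so there is no gap; your second, ``Bloch-vector'' route is unnecessary once the monogamy lemma is properly sourced.
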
 
\begin{proof}
	Due to the Theorem \ref{c} and the assumption, we have 
	\begin{align}
	\overline{C}(\rho_{A|BC})=\overline{C}(\rho_{AB})=C(\rho_{AB}),\label{cc1}
	\end{align}
	next by the Theorem 1 in \cite{guo2019monogamy}, 
	\begin{align}
	\overline{C}(\rho_{A|BC})\ge C(\rho_{A|BC}),\label{cc2}
	\end{align}
	then combing the (\ref{cc1}) and (\ref{cc2}), we have that 
	\begin{align}
	\overline{C}(\rho_{A|BC})\ge& C(\rho_{A|BC})\nonumber\\
	\ge& C(\rho_{AB})\nonumber\\
	=&\overline{C}(\rho_{AB}),
	\end{align}
	that is, $C(\rho_{A|BC})=C(\rho_{AB}).$ As in \cite{guo2019monogamy}, the authors presented that $C$ is monogamous, then $C(\rho_{AC})=0,$ that is, $\rho_{AC}$ is separable. On the other hand, a seprable pure state can be transformed into a separable mixed state through LOCC, then we have 
	\begin{align}
	\overline{C}(\rho_{AC})=0.
	\end{align}
\end{proof}
\section{Conclusion}
\indent In the paper, we have presented an approach to bulid an entanglement measure for mixed states based on the measure for pure states. First we have presented when the entanglement measure is entanglement monotone, convex and subaddivity, we also have considered the relationship between the entanglement measure built by the convex roof extended method and the method proposed here. Then we present the relation between the smoothed one-shot entanglement cost and the entanglement cost built from the method proposed here, which may present an operational interpretation of the latter entanglement measure. At last, we have presented some applications, first we have presented an example, it told us a difference between the measure bulit from the method here under SEPP and LOCC, then we have presented the equality between the entanglement measure generated from the geometric entanglement measure for pure states under the convex roof extended method and the method here, we also have presented that for $2\otimes 2$ systems, the entanglement measure generated from the concurrence for pure states under the convex roof extended method and the method here are equal, which can show the entanglement measure bulit from our method is monogamous for $2\otimes2\otimes d$.
\bibliographystyle{IEEEtran}
\bibliography{ref}

\begin{thebibliography}{10}
\providecommand{\url}[1]{#1}
\csname url@rmstyle\endcsname
\providecommand{\newblock}{\relax}
\providecommand{\bibinfo}[2]{#2}
\providecommand\BIBentrySTDinterwordspacing{\spaceskip=0pt\relax}
\providecommand\BIBentryALTinterwordstretchfactor{4}
\providecommand\BIBentryALTinterwordspacing{\spaceskip=\fontdimen2\font plus
\BIBentryALTinterwordstretchfactor\fontdimen3\font minus
  \fontdimen4\font\relax}
\providecommand\BIBforeignlanguage[2]{{%
\expandafter\ifx\csname l@#1\endcsname\relax
\typeout{** WARNING: IEEEtran.bst: No hyphenation pattern has been}%
\typeout{** loaded for the language `#1'. Using the pattern for}%
\typeout{** the default language instead.}%
\else
\language=\csname l@#1\endcsname
\fi
#2}}

\bibitem{horodecki2009quantum}
R.~Horodecki, P.~Horodecki, M.~Horodecki, and K.~Horodecki, ``Quantum
  entanglement,'' \emph{Reviews of modern physics}, vol.~81, no.~2, p. 865,
  2009.

\bibitem{plenio2014introduction}
M.~B. Plenio and S.~S. Virmani, ``An introduction to entanglement theory,'' in
  \emph{Quantum Information and Coherence}.\hskip 1em plus 0.5em minus
  0.4em\relax Springer, 2014, pp. 173--209.

\bibitem{ekert1991quantum}
A.~K. Ekert, ``Quantum cryptography based on bell’s theorem,'' \emph{Physical
  review letters}, vol.~67, no.~6, p. 661, 1991.

\bibitem{bennett1993teleporting}
C.~H. Bennett, G.~Brassard, C.~Cr{\'e}peau, R.~Jozsa, A.~Peres, and W.~K.
  Wootters, ``Teleporting an unknown quantum state via dual classical and
  einstein-podolsky-rosen channels,'' \emph{Physical review letters}, vol.~70,
  no.~13, p. 1895, 1993.

\bibitem{bennett1992communication}
C.~H. Bennett and S.~J. Wiesner, ``Communication via one-and two-particle
  operators on einstein-podolsky-rosen states,'' \emph{Physical review
  letters}, vol.~69, no.~20, p. 2881, 1992.

\bibitem{bennett1996mixed}
C.~H. Bennett, D.~P. DiVincenzo, J.~A. Smolin, and W.~K. Wootters,
  ``Mixed-state entanglement and quantum error correction,'' \emph{Physical
  Review A}, vol.~54, no.~5, p. 3824, 1996.

\bibitem{vedral1997quantifying}
V.~Vedral, M.~B. Plenio, M.~A. Rippin, and P.~L. Knight, ``Quantifying
  entanglement,'' \emph{Physical Review Letters}, vol.~78, no.~12, p. 2275,
  1997.

\bibitem{vidal2000entanglement}
G.~Vidal, ``Entanglement monotones,'' \emph{Journal of Modern Optics}, vol.~47,
  no. 2-3, pp. 355--376, 2000.

\bibitem{wei2003geometric}
T.-C. Wei and P.~M. Goldbart, ``Geometric measure of entanglement and
  applications to bipartite and multipartite quantum states,'' \emph{Physical
  Review A}, vol.~68, no.~4, p. 042307, 2003.

\bibitem{markham2007entanglement}
D.~Markham, A.~Miyake, and S.~Virmani, ``Entanglement and local information
  access for graph states,'' \emph{New Journal of Physics}, vol.~9, no.~6, p.
  194, 2007.

\bibitem{horodecki2005local}
M.~Horodecki, P.~Horodecki, R.~Horodecki, J.~Oppenheim, A.~Sen, U.~Sen,
  B.~Synak-Radtke, \emph{et~al.}, ``Local versus nonlocal information in
  quantum-information theory: formalism and phenomena,'' \emph{Physical Review
  A}, vol.~71, no.~6, p. 062307, 2005.

\bibitem{christandl2004squashed}
M.~Christandl and A.~Winter, ``“squashed entanglement”: an additive
  entanglement measure,'' \emph{Journal of mathematical physics}, vol.~45,
  no.~3, pp. 829--840, 2004.

\bibitem{gour2020optimal}
G.~Gour and M.~Tomamichel, ``Optimal extensions of resource measures and their
  applications,'' \emph{arXiv preprint arXiv:2006.12408}, 2020.

\bibitem{terhal2000schmidt}
B.~M. Terhal and P.~Horodecki, ``Schmidt number for density matrices,''
  \emph{Physical Review A}, vol.~61, no.~4, p. 040301, 2000.

\bibitem{gour2018monogamy}
G.~Gour and Y.~Guo, ``Monogamy of entanglement without inequalities,''
  \emph{Quantum}, vol.~2, p.~81, 2018.

\bibitem{vidal1999entanglement}
G.~Vidal, ``Entanglement of pure states for a single copy,'' \emph{Physical
  Review Letters}, vol.~83, no.~5, p. 1046, 1999.

\bibitem{virmani2000ordering}
S.~Virmani and M.~Plenio, ``Ordering states with entanglement measures,''
  \emph{Physics Letters A}, vol. 268, no. 1-2, pp. 31--34, 2000.

\bibitem{rains1997entanglement}
E.~M. Rains, ``Entanglement purification via separable superoperators,''
  \emph{arXiv preprint quant-ph/9707002}, 1997.

\bibitem{rains1999bound}
E.~M. Rains, ``Bound on distillable entanglement,'' \emph{Physical Review A},
  vol.~60, no.~1, p. 179, 1999.

\bibitem{rains2001semidefinite}
E.~M. Rains, ``A semidefinite program for distillable entanglement,''
  \emph{IEEE Transactions on Information Theory}, vol.~47, no.~7, pp.
  2921--2933, 2001.

\bibitem{eggeling2001distillability}
T.~Eggeling, K.~G.~H. Vollbrecht, R.~F. Werner, and M.~M. Wolf,
  ``Distillability via protocols respecting the positivity of partial
  transpose,'' \emph{Physical review letters}, vol.~87, no.~25, p. 257902,
  2001.

\bibitem{brandao2010reversible}
F.~G. Brandao and M.~B. Plenio, ``A reversible theory of entanglement and its
  relation to the second law,'' \emph{Communications in Mathematical Physics},
  vol. 295, no.~3, pp. 829--851, 2010.

\bibitem{yu2020quantifying}
D.-h. Yu, L.-q. Zhang, and C.-s. Yu, ``Quantifying coherence in terms of the
  pure-state coherence,'' \emph{arXiv preprint arXiv:2002.04330}, 2020.

\bibitem{jonathan1999minimal}
D.~Jonathan and M.~B. Plenio, ``Minimal conditions for local pure-state
  entanglement manipulation,'' \emph{Physical review letters}, vol.~83, no.~7,
  p. 1455, 1999.

\bibitem{nielsen1999conditions}
M.~A. Nielsen, ``Conditions for a class of entanglement transformations,''
  \emph{Physical Review Letters}, vol.~83, no.~2, p. 436, 1999.

\bibitem{bhatia2013matrix}
R.~Bhatia, \emph{Matrix analysis}.\hskip 1em plus 0.5em minus 0.4em\relax
  Springer Science \& Business Media, 2013, vol. 169.

\bibitem{vidal1999robustness}
G.~Vidal and R.~Tarrach, ``Robustness of entanglement,'' \emph{Physical Review
  A}, vol.~59, no.~1, p. 141, 1999.

\bibitem{vedral1998entanglement}
V.~Vedral and M.~B. Plenio, ``Entanglement measures and purification
  procedures,'' \emph{Physical Review A}, vol.~57, no.~3, p. 1619, 1998.

\bibitem{vedral2002role}
V.~Vedral, ``The role of relative entropy in quantum information theory,''
  \emph{Reviews of Modern Physics}, vol.~74, no.~1, p. 197, 2002.

\bibitem{buscemi2011entanglement}
F.~Buscemi and N.~Datta, ``Entanglement cost in practical scenarios,''
  \emph{Physical review letters}, vol. 106, no.~13, p. 130503, 2011.

\bibitem{regula2019one}
B.~Regula, K.~Fang, X.~Wang, and M.~Gu, ``One-shot entanglement distillation
  beyond local operations and classical communication,'' \emph{New Journal of
  Physics}, vol.~21, no.~10, p. 103017, 2019.

\bibitem{winter2016tight}
A.~Winter, ``Tight uniform continuity bounds for quantum entropies: conditional
  entropy, relative entropy distance and energy constraints,''
  \emph{Communications in Mathematical Physics}, vol. 347, no.~1, pp. 291--313,
  2016.

\bibitem{hayden2001asymptotic}
P.~M. Hayden, M.~Horodecki, and B.~M. Terhal, ``The asymptotic entanglement
  cost of preparing a quantum state,'' \emph{Journal of Physics A: Mathematical
  and General}, vol.~34, no.~35, p. 6891, 2001.

\bibitem{chitambar2020entanglement}
E.~Chitambar, J.~I. de~Vicente, M.~W. Girard, and G.~Gour, ``Entanglement
  manipulation beyond local operations and classical communication,''
  \emph{Journal of Mathematical Physics}, vol.~61, no.~4, p. 042201, 2020.

\bibitem{chen2010computation}
L.~Chen, A.~Xu, and H.~Zhu, ``Computation of the geometric measure of
  entanglement for pure multiqubit states,'' \emph{Physical Review A}, vol.~82,
  no.~3, p. 032301, 2010.

\bibitem{chen2011Connections}
L.~Chen, H.~Zhu, and T.-C. Wei, ``Connections of geometric measure of
  entanglement of pure symmetric states to quantum state estimation,''
  \emph{Physical Review A}, vol.~83, no.~1, p. 012305, 2011.

\bibitem{zhu2010additivity}
H.~Zhu, L.~Chen, and M.~Hayashi, ``Additivity and non-additivity of
  multipartite entanglement measures,'' \emph{New Journal of Physics}, vol.~12,
  no.~8, p. 083002, 2010.

\bibitem{briegel2001persistent}
H.~J. Briegel and R.~Raussendorf, ``Persistent entanglement in arrays of
  interacting particles,'' \emph{Physical Review Letters}, vol.~86, no.~5, p.
  910, 2001.

\bibitem{lu2009demonstrating}
C.-Y. Lu, W.-B. Gao, O.~G{\"u}hne, X.-Q. Zhou, Z.-B. Chen, and J.-W. Pan,
  ``Demonstrating anyonic fractional statistics with a six-qubit quantum
  simulator,'' \emph{Physical review letters}, vol. 102, no.~3, p. 030502,
  2009.

\bibitem{wieczorek2009experimental}
W.~Wieczorek, R.~Krischek, N.~Kiesel, P.~Michelberger, G.~T{\'o}th, and
  H.~Weinfurter, ``Experimental entanglement of a six-photon symmetric dicke
  state,'' \emph{Physical review letters}, vol. 103, no.~2, p. 020504, 2009.

\bibitem{wootters1998entanglement}
W.~K. Wootters, ``Entanglement of formation of an arbitrary state of two
  qubits,'' \emph{Physical Review Letters}, vol.~80, no.~10, p. 2245, 1998.

\bibitem{rossignoli2002generalized}
R.~Rossignoli and N.~Canosa, ``Generalized entropic criterion for
  separability,'' \emph{Physical Review A}, vol.~66, no.~4, p. 042306, 2002.

\bibitem{horodecki2001separability}
M.~Horodecki, P.~Horodecki, and R.~Horodecki, ``Separability of n-particle
  mixed states: necessary and sufficient conditions in terms of linear maps,''
  \emph{Physics Letters A}, vol. 283, no. 1-2, pp. 1--7, 2001.

\bibitem{luo2016general}
Y.~Luo, T.~Tian, L.-H. Shao, and Y.~Li, ``General monogamy of tsallis q-entropy
  entanglement in multiqubit systems,'' \emph{Physical Review A}, vol.~93,
  no.~6, p. 062340, 2016.

\bibitem{song2016general}
W.~Song, Y.-K. Bai, M.~Yang, M.~Yang, and Z.-L. Cao, ``General monogamy
  relation of multiqubit systems in terms of squared r{\'e}nyi-$\alpha$
  entanglement,'' \emph{Physical Review A}, vol.~93, no.~2, p. 022306, 2016.

\bibitem{coffman2000distributed}
V.~Coffman, J.~Kundu, and W.~K. Wootters, ``Distributed entanglement,''
  \emph{Physical Review A}, vol.~61, no.~5, p. 052306, 2000.

\bibitem{san2010tsallis}
J.~San~Kim, ``Tsallis entropy and entanglement constraints in multiqubit
  systems,'' \emph{Physical Review A}, vol.~81, no.~6, p. 062328, 2010.

\bibitem{bai2014general}
Y.-K. Bai, Y.-F. Xu, and Z.~Wang, ``General monogamy relation for the
  entanglement of formation in multiqubit systems,'' \emph{Physical review
  letters}, vol. 113, no.~10, p. 100503, 2014.

\bibitem{ou2007violation}
Y.-C. Ou, ``Violation of monogamy inequality for higher-dimensional objects,''
  \emph{Physical Review A}, vol.~75, no.~3, p. 034305, 2007.

\bibitem{guo2019monogamy}
Y.~Guo and G.~Gour, ``Monogamy of the entanglement of formation,''
  \emph{Physical Review A}, vol.~99, no.~4, p. 042305, 2019.

\end{thebibliography}
\end{document}